\newcommand {\al}   {\alpha}       \newcommand {\bt}  {\beta}
\newcommand {\g }   {\gamma}       
\newcommand {\dl}   {\delta}       \newcommand {\e }  {\epsilon}
\newcommand {\z }   {\zeta}        
\newcommand {\ve}   {\varepsilon}
\newcommand {\vf }  {\varphi}      
\newcommand {\Lm}   {\Lambda}      \newcommand {\Om}  {\Omega}
\newcommand {\pl}   {\partial}     \newcommand {\nb}  {\nabla}
\newcommand   {\const}{{\sf\,const}}     \newcommand   {\diag}{{\sf\,diag\,}}
         \newcommand   {\Lie}{{\sf\,L}}
\newcommand   {\osmall}{{\sf o}}        
\newcommand {\MG}  {{\mathbb G}}   \newcommand {\MH}  {{\mathbb H}}
\newcommand {\MI}  {{\mathbb I}}   
\newcommand {\MM}  {{\mathbb M}}   
\newcommand {\MO}  {{\mathbb O}}   
   \newcommand {\MR}  {{\mathbb R}}
\newcommand {\MS}  {{\mathbb S}}   
\newcommand {\MU}  {{\mathbb U}}
      \newcommand {\CX}  {{\cal X}}
\newcommand {\Sn}  {{\textsc{n}}}
\newcommand {\CC }  {{\cal C}}
\newcommand {\Gi}  {\mathfrak{i}}
\newtheorem{prop}{Proposition}[section]
\newtheorem{exa}{Example}[section]
\newtheorem*{defn}{Definition}
\newtheorem{theorem}{Theorem}[section]
\newtheorem*{cor}{Corollary}
\begin{document}
\title     {Killing vector fields and a homogeneous isotropic universe}
\author    {M.~O.~Katanaev
            \thanks{E-mail: katanaev@mi.ras.ru}\\ \\
            \sl Steklov Mathematical Institute,\\
            \sl ul.~Gubkina, 8, Moscow, 119991, Russia}
\date      {20 September 2016}
\maketitle
\begin{abstract}
Some basic theorems on Killing vector fields are reviewed. In particular, the
topic of a constant-curvature space is examined. A detailed proof is given for a
theorem describing the most general form of the metric of a homogeneous
isotropic space-time. Although this theorem can be considered to be commonly
known, its complete proof is difficult to find in the literature. An example
metric is presented such that all its spatial cross sections correspond to
constant-curvature spaces, but it is not homogeneous and isotropic as a whole.
An equivalent definition of a homogeneous and isotropic space-time in terms of
embedded manifolds is also given.
\end{abstract}
\section{Introduction}
We first introduce our conventions and definitions, and then formulate the
theorem.
\begin{defn}
A space-time is a pair $(\MM,g)$, where $\MM$ is a four dimensional
differentiable manifold and $g$ is a metric of the Lorentzian signature
$(+---)$ given on it.
\end{defn}
We assume that both the manifold $\MM$ and the metric $g$ are sufficiently
smooth. Furthermore, we assume that the manifold is geodesically complete, i.e.,
any geodesic can be extended in both directions beyond any value of the
canonical parameter. In general relativity, a space-time manifold is commonly
geodesically incomplete due to singularities of solutions of the Einstein
equations. In this paper, we do no consider dynamical equations, focusing on the
kinematical aspect only. Therefore, our assumption on the geodesic completeness
is quite natural and, for example, does not allow considering only a part of the
whole sphere.

If the null coordinate line in local coordinates $x^\al$, $\al=0,1,2,3$, is
time-like, $(\pl_0,\pl_0)=g_{00}>0$, where the parenthesis denote the scalar
product, then the coordinate $x^0:=t$ is called time. Spatial indices are
denoted by Greek letters from the middle of the alphabet:
$\mu,\nu,\dotsc=1,2,3$. Then $\lbrace x^\al\rbrace=\lbrace x^0,x^\mu\rbrace$.

The modern observational data indicate that our Universe is homogeneous and
isotropic (the cosmological principle), at least at the first approximation.
Most cosmological models rely on the following statement.
\begin{theorem}                                                   \label{tbchyt}
Let a four-dimensional manifold be the topological product $\MM=\MR\times\MS$,
where $t\in\MR$ is the time coordinate and $x\in\MS$ is a three-dimensional
space of constant curvature. We suppose that $\MM$ is endowed with a
sufficiently smooth metric of the Lorentzian signature. It follows that if
space-time is homogeneous and isotropic, then a coordinate system $t,x^\mu$,
$\mu=1,2,3$ exists in the neighborhood of each point such that the metric takes
the form
\begin{equation}                                                  \label{qkdtwu}
  ds^2=dt^2+a^2\overset\circ g_{\mu\nu}dx^\mu dx^\nu,
\end{equation}
where $a(t)>0$ is an arbitrary function of time (the scale factor), and
$\overset\circ g_{\mu\nu}(x)$ is a negative-definite constant-curvature metric
on $\MS$ depending on the spatial coordinates $x\in\MS$ only.
\end{theorem}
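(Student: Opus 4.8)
The plan is to translate the hypotheses of homogeneity and isotropy into statements about Killing vector fields and then extract the metric form componentwise. First I would record that a homogeneous and isotropic three-dimensional geometry carries the maximal number of independent Killing vectors: three arising from homogeneity, whose values at any point span the spatial tangent space (the group acts transitively on the spatial orbits), and three arising from isotropy, which vanish at a chosen point but whose derivatives there generate the rotation group $\mathrm{SO}(3)$. By the constant-curvature results reviewed above, such a maximally symmetric three-space is precisely a space of constant curvature, consistent with the hypothesis on $\MS$. These six Killing fields may be taken tangent to the spatial orbits $\{t\}\times\MS$, so that their time components vanish.

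Next I would eliminate the mixed components of the metric. At each point the isotropy subgroup acts as $\mathrm{SO}(3)$ on the three spatial directions, and the quantities $g_{0\mu}$ transform under it as the components of a spatial covector. Since $\mathrm{SO}(3)$ fixes no nonzero vector, invariance forces $g_{0\mu}=0$, so the metric block-diagonalizes into a temporal and a spatial part. The component $g_{00}$ is an $\mathrm{SO}(3)$ scalar and, being invariant under the transitive spatial action coming from homogeneity, cannot depend on the spatial coordinates; hence $g_{00}=g_{00}(t)$. Because $\pl_0$ is time-like we have $g_{00}>0$, and the reparametrization $dt'=\sqrt{g_{00}}\,dt$ brings this component to unity, giving the $dt^2$ term of \eqref{qkdtwu}.

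It remains to determine the spatial block $g_{\mu\nu}(t,x)$. For each fixed $t$ the induced metric admits the full six-parameter group, hence is itself a constant-curvature metric on $\MS$. The decisive step is to show that the spatial metrics at different times are mutually proportional. At any point the space of $\mathrm{SO}(3)$-invariant symmetric bilinear forms on the spatial tangent space is one-dimensional, so $g_{\mu\nu}(t,x)=\lambda(t,x)\,\overset\circ g_{\mu\nu}(x)$ for a reference constant-curvature metric $\overset\circ g_{\mu\nu}$ and a positive factor $\lambda$. Transitivity of the homogeneity group then forces $\lambda$ to be independent of $x$, so that $\lambda=a^2(t)$ for some $a(t)>0$. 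Finally, the Lorentzian signature $(+---)$ together with the now-positive $dt^2$ term requires the spatial metric, and hence $\overset\circ g_{\mu\nu}$, to be negative-definite. Collecting the pieces yields \eqref{qkdtwu}.

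I expect the proportionality of the spatial metrics across distinct time slices to be the main obstacle. One must argue carefully that a single isometry group acts consistently on every slice and that the conformal factor relating two slices is genuinely spatially constant rather than merely locally so; this is where homogeneity (transitivity) and the one-dimensionality of the invariant-form space must be combined. Justifying that the Killing fields can be chosen tangent to the slices, and that the induced isotropy action is the full $\mathrm{SO}(3)$ rather than a proper subgroup, also requires attention before the componentwise arguments above can be applied.
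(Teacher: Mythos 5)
Your overall strategy --- componentwise invariance arguments in the spirit of Weinberg, rather than the paper's geodesic construction --- is a legitimate route in principle, and your final step is essentially the paper's Example \ref{eabhhf}: the paper applies the ``invariant symmetric $2$-tensor is a constant multiple of the metric'' lemma to $\dot h_{\mu\nu}$ and integrates an ODE, while you apply it to $h_{\mu\nu}(t,\cdot)$ directly against a reference slice metric; both work once the prerequisites are in place.

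The genuine gap is in your second paragraph. The slice isometries, extended to $\MM$ as in (\ref{qnjiis}), have the form $t\mapsto t$, $x\mapsto x'(x,t)$ with a nontrivial parametric dependence on $t$ (the Killing components $K^\mu(t,x)$ depend on $t$). Under such a map, transformation law (\ref{eisomt}) gives
\begin{equation*}
  g_{0\mu}(t,x)=\frac{\pl x^{\prime\nu}}{\pl x^\mu}\,g_{0\nu}(t,x')
  +\frac{\pl x^{\prime\rho}}{\pl t}\,\frac{\pl x^{\prime\nu}}{\pl x^\mu}\,h_{\rho\nu}(t,x'),
\end{equation*}
and $g_{00}$ likewise acquires terms in $\pl x'/\pl t$. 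So $g_{0\mu}$ does \emph{not} transform as a spatial covector, and $g_{00}$ is \emph{not} an invariant scalar, unless one first shows that the isometries (equivalently, the Killing fields) can be taken independent of $t$; the same issue undercuts your claim that the isotropy subgroup acts on the full tangent space as the standard block-diagonal $\mathrm{SO}(3)$. Establishing this time-independence is precisely the hard content of the theorem: metric (\ref{emfrep}) in Section \ref{sjhdtr} has constant-curvature slices with the full six-parameter group on every slice, yet its slice Killing vectors (\ref{qbsghu}) are unavoidably $t$-dependent and the metric is not homogeneous and isotropic. Your closing remarks flag the concern but do not resolve it. The paper resolves it by first passing to synchronous (Gaussian normal) coordinates via a purely geodesic construction that uses no symmetry at all --- geodesics shot orthogonally from one slice, with arc length as time, forcing $g_{00}=1$ and $g_{0\mu}=0$ --- and only then reading off from the $(0,0)$- and $(0,\mu)$-components of the Killing equations that $K^0\equiv0$ propagates and $\pl_0K^\mu=0$ (Theorem \ref{tnbdht}). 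With that in hand your representation-theoretic finish is sound; without it, the first two steps of your argument are unjustified.
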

Thus, the most general metric of a homogeneous and isotropic universe has form
(\ref{qkdtwu}) modulo a coordinate transformation.

This theorem is independent of the dimension of the manifold $\MM$ and the
signature of the metric $g$. The first condition of the theorem can be replaced
by the following: ``Let any constant-time slice $t\in\MR$ of the space-time
$\MM$ be space of constant curvature''. An exact definition of a homogeneous and
isotropic universe is given in Section \ref{sxzlpf}.

Theorem \ref{tbchyt} is fundamental in relativistic cosmology and is therefore
very important. Metric (\ref{qkdtwu}) was originally considered in [1--11].
We make a few comments on those parts of the original papers that are related to
the form of the metric.
\nocite{Friedm22R,Friedm24,Lemait27,Lemait33,Robert29,Robert33,Robert35}
\nocite{Tolman30A,Tolman30B,Tolman30C,Walker36}

Friedmann pioneered the use of metric (\ref{qkdtwu}) in cosmological models of
general relativity \cite{Friedm22R,Friedm24}. He did not write about homogeneous
and isotropic universe, instead, he simply required that all spatial cross
sections of constant time $t=\const$ are to be constant-curvature spaces and
assumed that the metric has form (\ref{qkdtwu}). In his first and second papers,
Friedmann considered spatial slices of positive and negative curvatures,
respectively.

Lemaitre analyzed solutions of the Einstein equations that describe closed
universe \cite{Lemait27}. However, he did not formulate theorem \ref{tbchyt}.
A more general class of cosmological models was considered in \cite{Lemait33},
but nevertheless the theorem was not yet formulated.

Robertson gave the theorem in both papers \cite{Robert29,Robert33}, but did not
provide a proof. Instead, he cited papers \cite{Hilber24,Fubini04}. The proof
consists of two parts. The first (Theorem \ref{tnbdht} in Section \ref{seqway})
was proved in general by Hilbert \cite{Hilber24}. The second part (Theorem
\ref{tsrewh} in Section \ref{seqway}) was proved in one direction by Fubini
\cite{Fubini04} (see also \cite{Eisenh26R}, Chapter VI, Exercise 3). Namely,
Fubini showed that metric (\ref{qkdtwu}) is homogeneous and isotropic, but the
converse proposition that {\em any} homogeneous and isotropic metric has this
form was not proved. In \cite{Robert35}, metric (\ref{qkdtwu}) was derived in a
different way by considering observers possessing certain properties. By
construction the resulting metric was homogeneous and isotropic. However,
Robertson proposed (see the discussion above Eqn (2.1) in \cite{Robert35}) that
the spatial part of the metric describes a constant-curvature space where the
curvature can take only discrete values $\pm1,0$, and therefore the general form
of metric (\ref{qkdtwu}) was not shown. Metric (\ref{emfrep}) below fits the
construction but in not of form (\ref{qkdtwu}).

Tolman obtained linear element (\ref{qkdtwu}) on a different basis
\cite{Tolman30A,Tolman30B,Tolman30C}. In particular, he assumed that there is a
spherical symmetry and that the time-coordinate lines are geodesics. In
addition, he required that the Einstein equations be satisfied. Homogeneity and
isotropy were not discussed in his papers.

In \cite{Walker36}, Section 10, Walker proved theorem \ref{tbchyt} in one
direction: metric (\ref{qkdtwu}) is homogeneous and isotropic. However, he did
not prove that {\em any} homogeneous and isotropic metric is of this form. In
fact, metric (\ref{emfrep}) given in Section 6 below satisfies Eqn (52) in
\cite{Walker36}, but is not of form (\ref{qkdtwu}).

Furthermore, I browsed more than 30 monographs on the general theory of
relativity, including my favorite books [15--19],
\nocite{LanLif62,Weinbe72,HawEll73,MiThWh73,Wald84} and found a proof of
theorem \ref{tbchyt} only in book \cite{Weinbe72}.

The aim of this paper is to explicitly review the general properties of the
Killing vector fields and necessary features of constant-curvature Riemannian
(pseudo-Riemannian) spaces. The proof of theorem \ref{tbchyt} is given by two
theorems \ref{tnbdht} and \ref{tsrewh}. The main idea behind the proof is
borrowed from \cite{Weinbe72}, but details are different. In particular, the
proof of theorem \ref{tnbdht} given in \cite{Eisenh26R} is simpler. In Section
\ref{sjhdtr}, we describe an example metric all of whose spatial slices are
constant-curvature spaces, but nevertheless the total metric is not
homogeneous and isotropic. Also, a new equivalent definition of a homogeneous
and isotropic metric is given in this section.

We hope that the reader is familiar with the basic concepts of differential
geometry, which can be found, i.e., in \cite{KobNom6369R,DuNoFo98R}.
\section{Killing vector fields                                   \label{skilve}}
We consider $n$-dimensional Riemannian (pseudo-Riemannian) manifold $(\MM,g)$
endowed with a metric $g(x)=g_{\al\bt}(x)dx^\al\otimes dx^\bt$,
$\al,\bt=0,1,\dotsc,n-1$ and the corresponding Levi--Civita connection $\Gamma$.
\begin{defn}
Diffeomorphism
\begin{equation*}
  \imath:\quad \MM\ni\quad x\mapsto x'=\imath(x)\quad\in\MM
\end{equation*}
is called an {\em isometry} of a Riemannian (pseudo-Riemannian) manifold
$(\MM,g)$ if the metric remains invariant
\begin{equation}                                                  \label{eisoma}
  g(x)=\imath^*g(x'),
\end{equation}
where $\imath^*$ is the induced map of differential forms.
\qed\end{defn}

Because an isometry leaves the metric invariant, it follows that all other
structures expressed in terms of the metric, like the Levi--Civita connection,
the geodesics, and the curvature tensor, are also invariant.

Map (\ref{eisoma}) can be represented in coordinate form. Let points $x^\al$ and
$x^{\prime\al}$ lie in the same coordinate neighborhood, and have coordinates
$x^\al$ and $x^{\prime \al}$. Then the isometry $\imath$ of the form
\begin{equation}                                                  \label{eisomt}
  g_{\al\bt}(x)=\frac{\pl x^{\prime\g}}{\pl x^\al}
  \frac{\pl x^{\prime\dl}}{\pl x^\bt}g_{\g\dl}(x'),
\end{equation}
relates the metric components in different points of the manifold.
\begin{prop}
All isometries of a given Riemannian (pseudo-Riemannian) manifold $(\MM,g)$
form an isometry group denoted by $\MI(\MM)\ni\imath$.
\end{prop}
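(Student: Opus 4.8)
The plan is to observe that every isometry is in particular a diffeomorphism of $\MM$, and that the diffeomorphisms of $\MM$ already form a group under composition (with associativity automatic and the identity map as neutral element). It therefore suffices to verify that the isometries constitute a subgroup: that the identity is an isometry, that the composition of two isometries is again an isometry, and that the inverse of an isometry is an isometry. The convenient tool throughout is the invariant reformulation of condition (\ref{eisoma}), namely $\imath^*g=g$, together with the contravariant functoriality of the pullback, $(\imath_1\circ\imath_2)^*=\imath_2^*\circ\imath_1^*$.

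First I would note that the identity map satisfies $\mathrm{id}^*g=g$ trivially, so it belongs to the set and serves as the group unit. Next, for closure, if $\imath_1$ and $\imath_2$ are isometries then $(\imath_1\circ\imath_2)^*g=\imath_2^*(\imath_1^*g)=\imath_2^*g=g$, so the composite is again an isometry; since each factor is a diffeomorphism, so is their composite. For inverses, I would apply the functoriality identity to $\mathrm{id}=\imath\circ\imath^{-1}$ to obtain $g=\mathrm{id}^*g=(\imath^{-1})^*(\imath^*g)=(\imath^{-1})^*g$, which shows that $\imath^{-1}$ --- already a diffeomorphism because $\imath$ is --- is itself an isometry.

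Associativity needs no separate argument, as it is inherited from composition of maps. If one prefers to stay in coordinates, the same three checks can be carried out directly from (\ref{eisomt}) by multiplying Jacobian matrices and invoking the chain rule, but the coordinate computation is merely a transcription of the pullback identities above. I do not anticipate any genuine obstacle here; the only point requiring a moment's care is the correct (contravariant) ordering of the pullbacks when treating the inverse, so that one truly recovers $(\imath^{-1})^*g=g$ rather than its mirror image.
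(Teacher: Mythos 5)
Your proposal is correct and follows essentially the same route as the paper, which simply lists the four group axioms (closure, associativity, identity, inverses) for isometries under composition; you merely supply the explicit verifications via the pullback identity $(\imath_1\circ\imath_2)^*=\imath_2^*\circ\imath_1^*$ that the paper leaves implicit. No gaps.
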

\begin{proof}
The composition of two isometries is an isometry. The product of isometries is
associative. The identical map of $\MM$ is an isometry identified with the group
unit. Every isometry has an iverse, which is also an isometry.
\end{proof}

For a given metric, Eqn (\ref{eisomt}) defines functions $x'(x)$ that give an
isometry. In general, this equation has no solutions and the corresponding
manifold has no nontrivial isometries. In this case, the unit is a single
element of the isometry group. The larger the isometry group is, the smaller is
the class Riemannian (pseudo-Riemannian) manifolds.
\begin{exa}{\rm
Euclidean space $\MR^n$ endowed with the Euclidean metric $\dl_{\al\bt}$ has an
isometry group that is the inhomogeneous rotational group $\MI\MO(n,\MR)$,
$\dim\MI\MO(n,\MR)=\frac12n(n+1)$, consisting of rotations, translations, and
reflections.
}\qed\end{exa}

The isometry group $\MI(\MM)$ can be either a discrete group or a Lie group.
\begin{defn}
If the isometry group $\MI(\MM)$ is a Lie group, we can consider infinitesimal
transformations. In this case, we are dealing with infinitesimal isometries:
\begin{equation}                                                  \label{qbdvct}
  x^\al\mapsto x^{\prime\al}=x^\al+\e K^\al,\qquad\e\ll1.
\end{equation}
Any infinitesimal isometry is generated by a sufficiently smooth vector field
$K(x)=K^\al(x)\pl_\al$, which is called a {\em Killing vector field}.
\qed\end{defn}

Let $K=K^\al\pl_\al$ be a Killing vector field. Then invariance condition
(\ref{qbdvct}) takes the infinitesimal form
\begin{equation}                                                  \label{elidmk}
  \Lie_K g=0,
\end{equation}
where $\Lie_K$ is the Lie derivative along the vector field $K$. The coordinate
form is given by
\begin{equation}                                                  \label{ekileq}
   \nb_\al K_\bt+ \nb_\bt K_\al=0,
\end{equation}
where $K_\al:=K^\bt g_{\bt\al}$ is a Killing $1$-form, and the covariant
derivative
\begin{equation*}
  \nb_\al K_\bt:=\pl_\al K_\bt-\Gamma_{\al\bt}{}^\g K_\g
\end{equation*}
is defined by the Christoffel symbols $\Gamma_{\al\bt}{}^\g$ (the Levi--Civita
connection).
\begin{defn}
Equation (\ref{ekileq}) is called the {\em Killing equation} and integral curves
of a Killing vector field $K=K^\al\pl_\al$ are called {\em Killing
trajectories}. Any Killing vector field is in one-to-one correspondence with a
$1$-form $K=dx^\al K_\al$, where $K_\al:=K^\bt g_{\bt\al}$, which is called a
{\em Killing form}.
\qed\end{defn}

For any Riemannian (pseudo-Riemannian) manifold $(\MM,g)$, Killing equation
(\ref{elidmk}) always has the trivial solution $K=0$. If the equation has zero
solution only, there is no nontrivial continious symmetries.

The Killing trajectories $\lbrace x^\al(\tau)\rbrace\in\MM$ with $\tau\in\MR$
are defined by the system of ordinary differential equations
\begin{equation}                                                  \label{qlkodn}
  \dot x^\al=K^\al,
\end{equation}
which has a unique solution passing through a point
$p=\lbrace p^\al\rbrace\in\MM$ for any differentiable Killing vector field. For
small $\tau\ll1$, the trajectory has the form
\begin{equation}                                                  \label{ekiltr}
  x^\al(t)=p^\al+\tau K^\al(p)+\osmall(\tau),
\end{equation}
where the integration constant is chosen such that the trajectory goes through
the point $p$ at $\tau=0$.

Any Killing vector field generates a one-parameter subgroup of the isometry
group. If a Killing vector field vanishes at some point, this point is a
stationary point under the action of the isometry group generated by the vector
field. Killing vector fields are called complete if Killing trajectories are
defined for all $\tau\in\MR$. They must have this property because the
isometries form a group.

A given Killing vector field defines not only an infinitesimal symmetry but
also the whole one-parameter subgroup of $\MI(\MM)$. For this, we have to find
integral curves (Killing trajectories) passing through any point $p\in\MM$. If
$x(0)=p$, then there is a diffeomorphism
\begin{equation*}
  \imath:\quad \MM\ni\quad p\mapsto x(t)\quad\in\MM,
\end{equation*}
corresponding to each value $\tau\in\MR$.

The contravariant component form of Killing equation (\ref{ekileq}) is given by
\begin{equation}                                                  \label{ekilco}
  g_{\al\g}\pl_\bt K^\g+g_{\bt\g}\pl_\al K^\g+K^\g\pl_\g g_{\al\bt}=0.
\end{equation}
This equation is linear in both the Killing vector and the metric. It follows
that any two metrics differing by a prefactor have the same Killing
trajectories. Moreover, Killing vector fields are defined modulo an arbitrary
nonzero constant prefactor. In particular, if $K$ is a Killing vector, then $-K$
is also a Killing vector. If there are several Killing vector fields, their
linear combination is again a Killing vector field. In other words, Killing
fields form a vector space over real numbers, which is a subspace of the vector
space of all vector fields $\CX(\MM)$ on the manifold $\MM$. This space is
endowed with a bilinear form. It is easy to show that the commutator of two
Killing vector fields $K_1$ and $K_2$ is another Killing vector field:
\begin{equation*}
  \Lie_{[K_1,K_2]}g=\Lie_{K_1}\circ\Lie_{K_2}g-\Lie_{K_2}\circ\Lie_{K_1}g=0.
\end{equation*}
It follows that Killing vector fields form a Lie algebra $\Gi(\MM)$ over real
numbers, which is a subalgebra in the infinite-dimensional Lie algebra of all
vector fields, $\Gi(\MM)\subset\CX(\MM)$. This is the Lie algebra of the
isometry Lie group $\MI(\MM)$.

\begin{prop}
Let a Riemannian (pseudo-Riemannian) manifold $(\MM,g)$ have $\Sn\le\dim\MM$
nonvanishing commuting  and linearly independent Killing vector fields $K_i$,
$i=1,\dotsc,\Sn$. A coordinate system then exists such that the metric is
independent of $\Sn$ coordinates corresponding to Killing trajectories. The
converse statement is that if the metric is independent of $\Sn$ coordinates in
some coordinate system, then the metric $g$ has at least $\Sn$ nonvanishing
commuting Killing vector fields.
\end{prop}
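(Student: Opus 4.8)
The plan is to prove the two assertions separately. The direct statement rests on the classical rectification theorem for commuting vector fields, whereas the converse is an immediate consequence of the coordinate form (\ref{ekilco}) of the Killing equation.

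For the direct statement, the first step is to simultaneously straighten the fields $K_i$. Since $[K_i,K_j]=0$, the flows generated by the $K_i$ commute pairwise. Fixing a point $p\in\MM$ and an $(n-\Sn)$-dimensional submanifold $S\ni p$ transverse to the subspace spanned by $K_1(p),\dotsc,K_\Sn(p)$, I parametrize $S$ by coordinates $y^{\Sn+1},\dotsc,y^n$ and map a point $(y^1,\dotsc,y^n)$ to the image of the corresponding point of $S$ under the composition of the flows of $K_1,\dotsc,K_\Sn$ at parameters $y^1,\dotsc,y^\Sn$. Commutativity guarantees that this composition is independent of the order of the flows, while linear independence together with transversality guarantees that the differential of the map is nondegenerate at $p$. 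Hence $(y^1,\dotsc,y^n)$ is a local coordinate system in which $K_i=\pl_i$ for $i=1,\dotsc,\Sn$.

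In these coordinates the Killing equation for each $K_i$ becomes especially simple. Because $K_i^\g=\dl_i^\g$ is constant, the two terms in (\ref{ekilco}) containing derivatives of the Killing vector vanish, and the equation reduces to $\pl_i g_{\al\bt}=0$. Since this holds for every $i=1,\dotsc,\Sn$, the metric components are independent of the $\Sn$ coordinates $y^1,\dotsc,y^\Sn$, which are precisely the parameters along the Killing trajectories. This establishes the direct statement. For the converse, suppose that in some coordinate system the components $g_{\al\bt}$ do not depend on $x^1,\dotsc,x^\Sn$. I claim that the coordinate vector fields $\pl_1,\dotsc,\pl_\Sn$ are Killing. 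Indeed, for $K=\pl_i$ the components $K^\g=\dl_i^\g$ are constant, so (\ref{ekilco}) again collapses to $\pl_i g_{\al\bt}=0$, which is exactly the assumed independence. These fields are nonvanishing, linearly independent, and satisfy $[\pl_i,\pl_j]=0$, so they furnish the required $\Sn$ commuting Killing vector fields.

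The only nontrivial step is the simultaneous rectification of the $K_i$, whose success hinges on both hypotheses: without commutativity the composite flow map would depend on the order of the flows and the $\pl_i$ would fail to be coordinate fields, while without linear independence the Jacobian of the map would degenerate. Once the straightening is carried out, both directions follow directly from the coordinate form (\ref{ekilco}) of the Killing equation.
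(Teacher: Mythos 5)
Your proof is correct and follows essentially the same route as the paper: simultaneous rectification of the commuting fields to $K_i=\pl_i$, after which both directions reduce to reading off $\pl_i g_{\al\bt}=0$ from the coordinate form (\ref{ekilco}) of the Killing equation. The only difference is that you spell out the flow-composition construction behind the straightening, which the paper merely asserts.
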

\begin{proof}
For any nonvanishing Killing vector field, there is a coordinate system where
the field has components $(1,0,\dotsc,0)$. For a set of independent commuting
vector fields $K_i$ this implies that there is a coordinate system
$x^1,\dotsc,x^n$ such that each Killing vector field has just one nontrivial
component $K_i=\pl_i$. In this coordinate system Killing equation (\ref{ekilco})
is particularly simple:
\begin{equation}                                                  \label{qvdgfr}
  \pl_i g_{\al\bt}=0,\qquad i=1,\dotsc,\Sn\le\dim\MM.
\end{equation}
This implies that the metric components are independent of the coordinates
$x^i$.

In this coordinate system, the Killing trajectories are given by the equations
\begin{equation*}
  \dot x^i=1,\quad\dot x^\mu=0,\qquad\mu\ne i.
\end{equation*}
We see that the coordinate lines $x^i$ are Killing trajectories.

As regards the converse statement, if the metric components are independent of
$\Sn$ coordinates, then Eqns (\ref{qvdgfr}) are satisfied. These are the
Killing equations for commuting vector fields $K_i:=\pl_i$.
\end{proof}

It follows that in the limit case where the number of commuting Killing vectors
is equal to the dimension of the manifold $\Sn=n$, a coordinate system exists
such that the corresponding metric components are constant.
\begin{exa}{\rm
In the Euclidean space $\MR^n$, the metric in Cartesian coordinates $x^\al$,
$\al=1,\dotsc,n$ has constant components $g_{\al\bt}=\dl_{\al\bt}$. This metric
has $n$ commuting Killing vectors $K_\al:=\pl_\al$, which generate translations.
All coordinate lines are Killing trajectories.}
\qed\end{exa}

If a Riemannian manifold $(\MM,g)$ has two or more noncommuting Killing vectors,
this does not mean that there is a coordinate system such that the metric
components are independent of two or more coordinates.
\begin{exa}{\rm
We consider the two-dimensional sphere $\MS^2\hookrightarrow\MR^3$ embedded into
a three-dimensional Euclidean space in the standard manner. Let the metric $g$
on the sphere be the induced metric. then the Riemannian manifold $(\MS^2,g)$
has three noncommuting Killing vector fields corresponding to the rotation group
$\MS\MO(3)$. It is obvious that there is no such coordinate system where the
metric components are independent of two coordinates. Indeed, in such a
coordinate system, the metric components are constant and therefore the
curvature is zero. But this is impossible because the curvature of the sphere is
nonzero.}
\qed\end{exa}

In general relativity, we suppose that the space-time is a pseudo-Riemannian
manifold $(\MM,g)$ endowed with a metric of the Lorentzian signature. Using the
notion of a Killing vector field, we can give the following invariant
definition.
\begin{defn}
A space-time $(\MM,g)$ is called {\em stationary} if there is a time-like
Killing vector field.
\qed\end{defn}

Killing vector fields have a number of remarkable features. We consider the
simplest of them.
\begin{prop}                                                      \label{poriki}
The length of a Killing vector along the Killing trajectory is constant:
\begin{equation}                                                  \label{ekilen}
  \Lie_K K^2=\nb_K K^2=K^\al\pl_\al K^2=0.
\end{equation}
\end{prop}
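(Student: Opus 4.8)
The plan is to reduce everything to a single symmetric--antisymmetric cancellation. First I would note that $K^2:=K^\al K_\al=g_{\al\bt}K^\al K^\bt$ is a scalar field on $\MM$, so the three expressions appearing in (\ref{ekilen}) automatically coincide: the Lie derivative of a function along $K$ is its directional derivative $K^\al\pl_\al K^2$, and since the covariant derivative reduces to the partial derivative when acting on scalars, one also has $\nb_K K^2=K^\al\nb_\al K^2=K^\al\pl_\al K^2$. Hence it suffices to prove that the single quantity $K^\g\nb_\g K^2$ vanishes.

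The key computational step is to differentiate $K^2$ covariantly and exploit the metric compatibility of the Levi--Civita connection, $\nb_\g g_{\al\bt}=0$. This gives
\begin{equation*}
  \nb_\g K^2=\nb_\g\bigl(g_{\al\bt}K^\al K^\bt\bigr)=2K^\al\nb_\g K_\al.
\end{equation*}
Contracting with $K^\g$ then yields $K^\g\nb_\g K^2=2K^\g K^\al\nb_\g K_\al$, so the whole claim is encoded in the vanishing of this last contraction.

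Finally I would invoke Killing equation (\ref{ekileq}) rewritten as $\nb_\g K_\al=-\nb_\al K_\g$, which states precisely that $\nb_\g K_\al$ is antisymmetric in the index pair $(\g,\al)$. Since $K^\g K^\al$ is manifestly symmetric in $(\g,\al)$, the full contraction of a symmetric tensor with an antisymmetric one vanishes identically, and therefore $K^\g\nb_\g K^2=0$. There is essentially no serious obstacle in this argument: it is the standard symmetry/antisymmetry trick, and the only point deserving a moment's care is the preliminary identification of the three derivatives in (\ref{ekilen}) as one and the same scalar directional derivative, which is what makes the statement meaningful before the Killing equation is even used.
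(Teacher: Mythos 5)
Your proposal is correct and follows essentially the same route as the paper: the paper's one-line proof contracts the Killing equation $\nb_\al K_\bt+\nb_\bt K_\al=0$ with $K^\al K^\bt$, which is exactly your symmetric--antisymmetric cancellation, and your preliminary identification of the three derivatives of the scalar $K^2$ is implicit in the paper's chain of equalities.
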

\begin{proof}
Contracting Eqns (\ref{ekileq}) with $K^\al K^\bt$ yields the equalities
\begin{equation*}                                                    \tag*{\qed}
  2K^\al K^\bt\nb_\al K_\bt=K^\al\nb_\al K^2=K^\al\pl_\al K^2=0.
\end{equation*}
\renewcommand{\qed}{}\end{proof}
\begin{cor}
Killing vector fields on a Lorentzian manifold are oriented time-like,
light-like, or space-like.
\end{cor}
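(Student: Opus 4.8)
The plan is to read off the result directly from Proposition \ref{poriki}. On a Lorentzian manifold the causal character of a tangent vector is governed entirely by the sign of its squared length: in the signature $(+---)$ a vector $K$ is time-like when $K^2>0$, light-like when $K^2=0$, and space-like when $K^2<0$. Since these three cases are mutually exclusive and are distinguished by the value of a single real quantity $K^2=g_{\al\bt}K^\al K^\bt$, it suffices to control how $K^2$ varies along the field.

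The key step is to invoke Proposition \ref{poriki}, which gives $K^\al\pl_\al K^2=0$, so that $K^2$ is constant on each integral curve of $K$, that is, on each Killing trajectory. Consequently the sign of $K^2$ cannot change as one moves along a trajectory: a vector that is time-like (respectively light-like, space-like) at one point of a Killing trajectory remains of the same type at every other point of that same trajectory. This is precisely the assertion that the field is consistently oriented along its trajectories, and the proof is therefore immediate once the preceding proposition is available.

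The only point requiring care—rather than a genuine obstacle—is the local nature of the statement. Proposition \ref{poriki} guarantees constancy of $K^2$ only along a fixed trajectory and says nothing across distinct trajectories. Hence I would be careful not to overread the conclusion: a single Killing field may well be time-like on one trajectory and space-like on another, with a light-like region separating the two. I would therefore phrase the conclusion as the constancy of the causal type along each individual Killing trajectory, which is exactly what the constancy of $K^2$ yields and all that the corollary should claim.
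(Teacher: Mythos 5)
Your proof is correct and is exactly the intended argument: the corollary is stated immediately after Proposition \ref{poriki} and follows from it by noting that the constancy of $K^2$ along each Killing trajectory fixes the sign of $K^2$, hence the causal type, along that trajectory. Your cautionary remark that the causal character is only guaranteed constant along each individual trajectory (not globally on $\MM$) is a correct and worthwhile clarification of what the corollary actually asserts.
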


A metric on a manifold defines two particular types of curves: geodesics
(extremals) and Killing trajectories, if they exist.
\begin{prop}                                                      \label{pkiext}
Let $(\MM,g)$ be a Riemannian (pseudo-Riemannian) manifold with a Killing vector
field $K$. A Killing trajectory is geodesic if and only if the length of the
Killing vector is constant on $\MM$: $K^2=\const$ for all $x\in\MM$.
\end{prop}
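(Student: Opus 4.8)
The plan is to parametrize the Killing trajectory by its natural parameter $\tau$, so that the tangent vector is $\dot x^\al=K^\al$, and then to compute the geodesic acceleration directly. Along the trajectory this acceleration is $\nb_{\dot x}\dot x^\al=\dot x^\bt\nb_\bt\dot x^\al=K^\bt\nb_\bt K^\al=\nb_K K^\al$, so the trajectory is an affinely parametrized geodesic precisely when $\nb_K K=0$. The whole statement therefore reduces to expressing the vector $\nb_K K$ in terms of the length $K^2$.

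The key step is to lower the free index and invoke Killing equation (\ref{ekileq}) to reverse the order of differentiation: $K^\bt\nb_\bt K_\al=-K^\bt\nb_\al K_\bt=-\frac12\nb_\al(K^\bt K_\bt)=-\frac12\pl_\al K^2$, the last equality holding because $K^2$ is a scalar. This single pointwise identity supplies both directions at once. If $K^2=\const$ on $\MM$, then $\pl_\al K^2=0$, hence $\nb_K K=0$ and every Killing trajectory is geodesic. Conversely, if the trajectories are geodesic then $\nb_K K=0$ forces $\pl_\al K^2=0$ at every point, so $K^2$ is constant.

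The step I expect to require the most care is the precise meaning of \emph{geodesic} in the converse direction, since a priori a curve is geodesic when $\nb_{\dot x}\dot x$ is merely proportional to $\dot x$, i.e. $\nb_K K=\lm K$ for some function $\lm$. Here the identity above combined with Proposition \ref{poriki} resolves the issue: contracting $\nb_K K_\al=-\frac12\pl_\al K^2$ with $K^\al$ gives $K^\al\nb_K K_\al=-\frac12K^\al\pl_\al K^2=0$ by (\ref{ekilen}), so the acceleration is always orthogonal to $K$. Then $\lm K$ must itself be orthogonal to $K$, i.e. $\lm K^2=0$; away from the zeros of $K$ the length $K^2$ cannot vanish in the Riemannian case, so $\lm=0$ and the curve is in fact affinely geodesic with $\nb_K K=0$. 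Finally, the vanishing of $\nb_K K$ at every point, together with continuity of $K^2$, upgrades the local relation $\pl_\al K^2=0$ to the global constancy $K^2=\const$ on $\MM$.
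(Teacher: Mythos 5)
Your proof is correct and follows essentially the same route as the paper's: both reduce the claim to the identity $K^\bt\nb_\bt K_\al=-\tfrac12\pl_\al K^2$ obtained from the Killing equation, the paper writing the acceleration out explicitly with Christoffel symbols while you phrase it covariantly as $\nb_KK$. Your extra discussion of the pregeodesic ambiguity (using Proposition~\ref{poriki} to kill the proportionality factor $\lm$) is a sensible refinement that the paper handles instead by observing that $\tau$ is already a canonical parameter because $K^2$ is constant along each trajectory.
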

\begin{proof}
A Killing trajectory $x^\al(\tau)$ is given by the system of equations
\begin{equation}                                                  \label{eqkilt}
  \dot x^\al=K^\al.
\end{equation}
The length of the infinitesimal interval of the Killing trajectory
\begin{equation*}
  ds^2=g_{\al\bt}\dot x^\al\dot x^\bt d\tau^2=K^2d\tau^2
\end{equation*}
is constant along the trajectory, i.e., the parameter $\tau$ is proportional to
the length of the trajectory and is therefore canonical. Differentiating
equation (\ref{eqkilt}) with respect to canonical parameter $\tau$ yields the
relation
\begin{equation*}
  \ddot x^\al=\pl_\bt K^\al\dot x^\bt
  =(\nb_\bt K^\al-\Gamma_{\bt\g}{}^\al K^\g)\dot x^\bt,
\end{equation*}
which can be represented as
\begin{equation}                                                  \label{eqslik}
  \ddot x^\al=K^\bt\nb_\bt K^\al-\Gamma_{\bt\g}{}^\al
  \dot x^\bt\dot x^\g.
\end{equation}
Using Killing equatins, we can rewrite the first term in the right-hand side as
\begin{equation*}
  K^\bt\nb_\bt K^\al=-\frac12g^{\al\bt}\pl_\bt K^2.
\end{equation*}
Then Eqn (\ref{eqslik}) takes the form
\begin{equation*}
  \ddot x^\al=-\frac12g^{\al\bt}\pl_\bt K^2-\Gamma_{\bt\g}{}^\al
  \dot x^\bt\dot x^\g.
\end{equation*}
The last equation coincides with the geodesic equation if and only if
$K^2=\const$.
\end{proof}
It follows that Killing trajectories differ from the geodesics in general.
\begin{exa}{\rm
We consider the Euclidean plane $\MR^2$ endowed with Euclidean metric. This
metric is invariant under three-parameter inhomogeneous rotation group
$\MI\MO(2)$. We let $x,y$ and $r,\vf$ denote Cartesian and polar coordinates on
the plane. Then the Killing vector fields corresponding to rotations and
translations are $K_1=\pl_\vf$ and $K_2=\pl_x$, $K_3=\pl_y$. The squared vector
norms are
\begin{equation*}
  K^2_1=r^2,\qquad K^2_2=K^2_3=1.
\end{equation*}
The Killing vector fields $K_2$ and $K_3$ have a constant length on the whole
plane. Their trajectories are straight lines, which are geodesics. This agrees
with proposition \ref{pkiext}. The Killing trajectories corresponding to
rotations $K_1$ are concentric circles around the origin. In accordance with
proposition \ref{poriki}, the length of the Killing vector $K_1$ is constant
along the circles, but nonconstant on the whole plane $\MR^2$. The corresponding
Killing trajectories are circles, which are not geodesics.}
\qed\end{exa}
\begin{exa}{\rm
We consider a semisimple Lie group $\MG$ as a Riemannian (pseudo-Rieman\-nian)
manifold endowed with the Cartan--Killing form as an invariant metric. Then the
left-invariant and right-invariant vector fields on $\MG$ generate right and
left group actions. Both left and right group actions leave the metric
invariant. Therefore, left- and right-invariant vector fields are Killing vector
fields. Their length equals $\pm1$. Hence, the corresponding Killing
trajectories are geodesics.}
\qed\end{exa}

Contracting Killing equation (\ref{ekileq}) with the metric shows that the
divergence of a Killing vector field is zero:
\begin{equation}                                                  \label{edikiv}
  \nb_\al K^\al=0.
\end{equation}
The covariant derivative $\nb^\bt$ of Killing equation (\ref{ekileq}) takes the
form
\begin{equation*}
  \nb^\bt(\nb_\bt K_\al+\nb_\al K_\bt)=\triangle K_\al
  +(\nb^\bt\nb_\al-\nb_\al\nb^\bt)K_\bt=0,
\end{equation*}
where we used relation (\ref{edikiv}), and where $\triangle:=\nb^\bt\nb_\bt$ is
the Laplace--Beltrami operator on the manifold $\MM$. Using equality
\begin{equation*}
  [\nb_\al,\nb_\bt]K_\g=-R_{\al\bt\g}{}^\dl K_\dl
\end{equation*}
for the commutator of covariant derivatives, we arrive at the following equation
for the Killing vector:
\begin{equation}                                                  \label{eivkil}
  \triangle K_\al=R_{\al\bt} K^\bt,
\end{equation}
where $R_{\al\bt}:=R_{\al\g\bt}{}^\g$ is the Ricci tensor.

In the case of a constant-curvature space, the Ricci tensor is proportional to
the scalar curvature [see Eqn (\ref{ericsk}) below], and therefore Eqn
(\ref{eivkil}) is simplified to
\begin{equation*}
  \triangle K_\al=\frac Rn K_\al,\qquad R=\const.
\end{equation*}
In other words, each component of the Killing vector is an eigenfunction of the
Laplace--Beltrami operator.
\begin{prop}
Let $X,Y\in\CX(\MM)$ be two arbitrary vector fields on a Riemannian
(pseudo-Riemannian) manifold $(\MM,g)$ and $K$ be a Killing vector. Then the
following equality holds:
\begin{equation*}
  g\big((\Lie_K-\nb_K)X,Y\big)+g\big(X,(\Lie_K-\nb_K)Y\big)=0,
\end{equation*}
where $\Lie_KX=[K,X]$ is the Lie derivative and
$\nb_K X=K^\al(\pl_\al X^\bt+\Gamma_{\al\g}{}^\bt X^\g)\pl_\bt$ is the covariant
derivative of a vector field $X$ along the Killing vector field $K$.
\end{prop}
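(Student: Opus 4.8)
The plan is to reduce the identity to Killing equation (\ref{ekileq}) in two moves: first I would show that the operator $\Lie_K-\nb_K$ acts on vector fields \emph{algebraically}, as contraction with $\nb K$, and then I would check that this algebraic operator is antisymmetric with respect to $g$, the antisymmetry being exactly the Killing equation.

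First I would pass to local coordinates. Since $\Lie_K X=[K,X]$ has components
\begin{equation*}
  (\Lie_K X)^\bt=K^\al\pl_\al X^\bt-X^\al\pl_\al K^\bt,
\end{equation*}
while $(\nb_K X)^\bt=K^\al(\pl_\al X^\bt+\Gamma_{\al\g}{}^\bt X^\g)$, the terms $K^\al\pl_\al X^\bt$ cancel in the difference, leaving
\begin{equation*}
  \big((\Lie_K-\nb_K)X\big)^\bt=-X^\al\pl_\al K^\bt-\Gamma_{\al\g}{}^\bt K^\al X^\g.
\end{equation*}
Relabeling dummy indices and using the symmetry $\Gamma_{\al\g}{}^\bt=\Gamma_{\g\al}{}^\bt$ of the Levi--Civita connection (vanishing torsion), the right-hand side collapses to $-X^\al\nb_\al K^\bt$, that is, $(\Lie_K-\nb_K)X=-\nb_X K$; this is nothing but the torsion-free identity $\nb_K X-\nb_X K=[K,X]$. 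The crucial point is that all derivatives of $X$ have dropped out, so $\Lie_K-\nb_K$ is purely algebraic in $X$.

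Next I would substitute this into the bilinear expression. Lowering the free index with $g$ and using metric compatibility $\nb_\al g_{\bt\g}=0$ to write $g_{\bt\g}\nb_\al K^\g=\nb_\al K_\bt$, the two terms combine, after a relabeling of dummy indices, into
\begin{equation*}
  g\big((\Lie_K-\nb_K)X,Y\big)+g\big(X,(\Lie_K-\nb_K)Y\big)=-X^\al Y^\bt\big(\nb_\al K_\bt+\nb_\bt K_\al\big).
\end{equation*}
The parenthesis vanishes identically by Killing equation (\ref{ekileq}), which proves the claim.

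I do not expect a genuine obstacle: the entire content sits in the cancellation of the first step, after which the statement is merely a contracted form of the Killing equation. The only point demanding care is the correct use of the torsion-freeness of the connection when merging the Christoffel term with $\pl_\al K^\bt$; the remaining manipulations are routine index bookkeeping.
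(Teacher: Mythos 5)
Your proof is correct and follows exactly the route the paper intends: its proof is the one-line remark ``direct verification using the Christoffel symbols and Killing equation (\ref{ekileq})'', and your computation --- reducing $\Lie_K-\nb_K$ to the algebraic operator $X\mapsto-\nb_XK$ via torsion-freeness and then invoking the antisymmetry $\nb_\al K_\bt+\nb_\bt K_\al=0$ --- is precisely that verification carried out in full.
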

\begin{proof}
Direct verification using the Christoffel symbols and Killing equation
(\ref{ekileq}).
\end{proof}
\section{Homogeneous and isotropic spaces                        \label{sxzlpf}}
Killing equation (\ref{ekileq}) imposes severe restrictions on Killing vector
fields, which we have to discuss. Using the formula for the commutator of
covariant derivatives, we find the relation
\begin{equation}                                                  \label{eqcoki}
  \nb_\al \nb_\bt K_\g-\nb_\bt \nb_\al K_\g=-R_{\al\bt\g}{}^\dl K_\dl.
\end{equation}
Then, applying the identity
\begin{equation*}
  R_{\al\bt\g}{}^\dl+R_{\bt\g\al}{}^\dl+R_{\g\al\bt}{}^\dl=0
\end{equation*}
for the curvature tensor and Killing equation (\ref{ekileq}), we find the
equality
\begin{equation*}
  \nb_\al\nb_\bt K_\g+\nb_\bt\nb_\g K_\al+\nb_\g\nb_\al K_\bt=0,
\end{equation*}
where the terms are related by cyclic permutations. Using this equality, we can
represent equation (\ref{eqcoki}) as
\begin{equation}                                                  \label{emaeqk}
  \nb_\g\nb_\al K_\bt= R_{\al\bt\g}{}^\dl K_\dl.
\end{equation}
Contracting the indices $\g$ and $\al$, we obtain exactly equality
(\ref{eivkil}) from Section (\ref{skilve}).

Equation (\ref{emaeqk}) follows from the Killing equations. However, they are
not equivalent. Nevertheless, Eqn (\ref{emaeqk}) has important consequences.
We assume that Killing vector fields are real analytical functions< i.e, their
components can be represented as Taylor series converging in some neighborhood
$\MU_p$ of a point $p\in\MM$. We suppose that all components of the Killing
$1$-form $K_\al(p)$ and their first derivatives $\pl_\bt K_\al(p)$ are given
at some fixed point $p\in\MM$. Then the second partial derivatives of the
Killing $1$-form $\pl^2_{\bt\g}K_\al$ can be found from Eqn (\ref{emaeqk}). Now,
we evaluate the covariant derivative of Eqn (\ref{emaeqk}), thereby obtaining
some relation for the third derivatives and so on, up to infinity. It is
important that all the relations are linear in the Killing vector components and
their derivatives. It follows that the Killing $1$-form components in some
neighborhood $\MU_p$ are of the form
\begin{equation}                                                  \label{ekilde}
  K_\al(x,p)=A_\al{}^\bt(x,p)K_\bt(p)
  +B_{\al}{}^{\bt\g}(x,p)\big[\pl_\bt K_\g(p)-\pl_\g K_\bt(p)\big],
\end{equation}
where $A_\al{}^\bt(x,p)$ and $B_{\al}{}^{\bt\g}(x,p)$ are some functions. The
antisymmetry in the indices $\bt$ and $\g$ in the last term is achieved by
expressing the symmetrized partial derivative in terms of the Killing vector
components by means of Killing equations (\ref{ekileq}). Therefore, Killing
$1$-form components in some neighborhood $\MU_p$ are linear combinations of the
Killing form and their exterior derivative at the point $p$.

The Killing form $K_\al(x,p)$ depends on two arguments. The second variable
$p$ shows that the form has properties specified at the point $p\in\MM$. By
assumption, representation (\ref{ekilde}) holds at any point $p\in\MM$: it is
just necessary to know the values $K(p)$ and $dK(p)$. We suppose that the
functions $K_\al(x,p)$ are real analytic in both variables $x$ and $p$.

It is assumed that the Killing form components can be expanded into Taylor
series near any point $p\in\MM$. Let $\MU_p$ be a neighborhood of a point $p$
where representation (\ref{ekilde}) holds and is invertible, i.e., the variables
$x$ and $p$ can be replaced with some new functions $A$ and $B$. We consider a
point $q$ outside $\MU_p$. For this point, the invertible representation like
(\ref{ekilde}) is also true in some neighborhood $\MU_q$. We suppose that the
point $q$ lies close enough to $\MU_p$ such that the neighborhoods overlap,
$\MU_p\cap\MU_q\ne\emptyset$. Then, for any point belonging to the intersection
$x\in\MU_p\cap\MU_q$, representation (\ref{ekilde}) holds with respect to
components of $K(p)$ and $K(q)$ and their exterior derivatives. We see that the
Killing form and its exterior derivative at $q$ can be linearly expressed in
terms of their values at $p$. Therefore, representation (\ref{ekilde}) holds in
the union $\MU_p\cup\MU_q$. This construction can be extended to the whole
manifold $\MM$. As a result, representation (\ref{ekilde}) holds for all points
$x,p\in\MM$.

We now assume that a Riemannian (pseudo-Riemannian) manifold $(\MM,g)$ has
several Killing vector fields $K_i$, $i=1,\dotsc,\Sn$. Then representation
(\ref{ekilde}) holds for each Killing vector:
\begin{equation}                                                  \label{ekildr}
  K_{i\al}(x,p)=A_\al{}^\bt(x,p)K_{i\bt}(p)
  +B_{\al}{}^{\bt\g}(x,p)\big[\pl_\bt K_{i\g}(p)-\pl_\g K_{i\bt}(p)\big].
\end{equation}
The functions $A_\al{}^\bt(x,p)$ and $B_{\al}{}^{\bt\g}(x,p)$ are the same for
any Killing form, because they are defined by relation (\ref{emaeqk}), which is
linear in the Killing form components and their derivatives. They are uniquely
defined by the metric, the curvature, and its covariant derivatives. It is
supposed that in the resulting representation, the point $p\in\MM$ is arbitrary
but fixed, while the point $x\in\MM$ ranges the whole manifold $\MM$.

Equality (\ref{emaeqk}) is a system of partial differential equations for the
Killing form components and has nontrivial integrability conditions. One of them
has the covariant form
\begin{equation*}
  \left[\nb_\g,\nb_\dl\right]\nb_\al K_\bt
  =-R_{\g\dl\al}{}^\e\nb_\e K_\bt-R_{\g\dl\bt}{}^\e\nb_\al K_\e,
\end{equation*}
where the square brackets denote the commutator of covariant derivatives.
Substituting the initial equation (\ref{emaeqk}) for the second derivatives in
the left-hand side of this equation, by straightforward computation we find that
\begin{equation}                                                  \label{erelkd}
  \left(R_{\al\bt\g}{}^\e\dl_\dl^\z-R_{\al\bt\dl}{}^\e\dl_\g^\z
  +R_{\g\dl\al}{}^\e\dl_\bt^\z-R_{\g\dl\bt}{}^\e\dl_\al^\z\right)\nb_\z K_\e
  =\left(\nb_\g R_{\al\bt\dl}{}^\e-\nb_\dl R_{\al\bt\g}{}^\e\right)K_\e.
\end{equation}
When the curvature is nontrivial, this equation is a linear relation between
components of the Killing form $K_\al$ and their covariant derivatives
$\nb_\bt K_\al$. Conversely, if we know some properties of the Killing form,
the resulting equality can define the structure of the curvature tensor. In
Theorem \ref{tkildi} in what follows, Eqn (\ref{erelkd}) is used to prove the
statement that homogeneous and isotropic manifold is a constant-curvature space.
\begin{defn}
A Riemannian (pseudo-Riemannian) manifold $(\MM,g)$ of dimension $\dim\MM=n$ is
called {\em homogeneous at a point} $p\in\MM$ if there are infinitesimal
isometries mapping this point to any other point in some neighborhood $\MU_p$ of
$p$. In other words, the metric should have Killing vector fields with arbitrary
direction at $p$. Because Killing vectors form a linear space, it is necessary
and sufficient to have a set on $n$ Killing forms in the dual space
$K^{(\g)}=dx^\al K_\al{}^{(\g)}(x,p)$, where the index $\g$ in parenthesis
labels Killing forms, such that the following relations are satisfied:
\begin{equation}                                                  \label{ehocok}
  K_\al{}^{(\g)}(p,p)=\dl_\al^\g.
\end{equation}
If a Riemannian (pseudo-Riemannian) space $(\MM,g)$ is homogeneous at any point
$x\in\MM$, it is called {\em homogeneous}. In other words, the isometry group
act on $\MM$ transitively.

A Riemannian (pseudo-Riemannian) manifold $(\MM,g)$ is called {\em isotropic at
a point} $p\in\MM$ if there are infinitesimal isometries with Killing forms
$K(x,p)$ such that the given point is stable, i.e., $K(p,p)=0$, and the exterior
derivative $dK(x,p)$ at $p$ takes all possible values in the space of $2$-forms
$\Lm_2(\MM)|_p$ at $p$. This happens if and only if there is a set of
$\frac12n(n-1)$ Killing forms
$K^{[\g\dl]}=-K^{[\dl\g]}=dx^\al K_\al{}^{[\g\dl]}(x,p)$, where indices $\g,\dl$
label Killing forms, such that the following relations are satisfied:
\begin{equation}                                                  \label{eprkif}
\begin{split}
  K_\al{}^{[\g\dl]}(p,p)&=0,
\\
  \left.\frac{\pl K_\bt{}^{[\g\dl]}(x,p)}{\pl x^\al}\right|_{x=p}
  &=\dl_{\al\bt}^{\g\dl}-\dl_{\al\bt}^{\dl\g}.
\end{split}
\end{equation}
If a Riemannian (pseudo-Riemannian) manifold $(\MM,g)$ is isotropic at any
point, it is called {\em isotropic}.
\qed\end{defn}

By continuity, it follows that the forms $K^{(\g)}$ and $K^{[\g\dl]}$ are
linearly independent in some neighborhood of point $p$.
\begin{prop}                                                      \label{pisoho}
Any isotropic Riemannian (pseudo-Riemannian) manifold $(\MM,g)$ is also
homogeneous.
\end{prop}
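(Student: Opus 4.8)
The plan is to establish homogeneity at an arbitrary point $p\in\MM$ by producing Killing vector fields whose values at $p$ span the full tangent space $T_p\MM$; once the evaluation map $K\mapsto K(p)$ is surjective, selecting a basis and relabelling yields $n$ Killing forms normalized as in (\ref{ehocok}), which is precisely homogeneity at $p$. Since Killing fields form a vector space, it suffices to exhibit enough of them with prescribed values at $p$. The essential observation is that although the isotropy Killing fields vanish at their own base point, the isotropy fields based at \emph{neighbouring} points do not vanish at $p$; this is exactly where the hypothesis that $\MM$ is isotropic at \emph{every} point is used.

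First I would take the isotropy fields $K^{[\g\dl]}(x,p')$ guaranteed at a base point $p'$ in a small neighbourhood of $p$, normalized by (\ref{eprkif}). Since $K_\al{}^{[\g\dl]}(p',p')=0$, representation (\ref{ekilde}) reduces to its $B$-term and the field is controlled entirely by its exterior derivative at $p'$. To extract the value at $p$, I would differentiate the identity $K_\al{}^{[\g\dl]}(p',p')=0$ with respect to the base point $p'$. The chain rule splits this into the derivative in the first slot, fixed by the second line of (\ref{eprkif}), and the derivative in the second slot; solving for the latter gives the base-point derivative of the field. Writing $p'=p+\ve v$ and expanding then yields, to first order,
\begin{equation*}
  K_\al{}^{[\g\dl]}(p,p+\ve v)=-\ve\,(v^\g\dl_\al^\dl-v^\dl\dl_\al^\g)+\osmall(\ve).
\end{equation*}

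With this first-order formula the spanning is immediate. Forming combinations $\sum_{\g\dl}c_{\g\dl}K^{[\g\dl]}(\,\cdot\,,p+\ve v)$ with antisymmetric coefficients $c_{\g\dl}$ produces genuine Killing fields whose value at $p$ is, to leading order, $-2\ve\,c_{\g\al}v^\g$, i.e. an antisymmetric matrix applied to $v$, sweeping out the entire hyperplane complementary to the direction $v$; letting $v$ vary over the neighbourhood then fills out all of $T_p\MM$. Equivalently, any covector $u$ annihilating every such value would force $v^\g u^\dl-v^\dl u^\g=0$ for all $v$, hence $u=0$, so the values span. Choosing $n$ of these fields with linearly independent forms at $p$ and relabelling gives forms satisfying (\ref{ehocok}); as $p$ was arbitrary, $\MM$ is homogeneous.

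The main obstacle I anticipate is not the algebra but the analytic bookkeeping of the two-argument dependence $K(x,p')$: I must know that the isotropy fields can be chosen to depend smoothly (indeed analytically) on their base point $p'$, so that differentiation in the second slot is legitimate, and that the leading-order linear independence persists for the actual fields at a fixed small $\ve\neq0$ and not merely in the formal limit. Both follow from the analyticity assumptions underlying (\ref{ekilde}) together with the continuity remark preceding the proposition, but making the passage from the infinitesimal span to an honest surjective evaluation map precise is the step that requires care.
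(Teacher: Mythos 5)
Your proposal is correct and follows essentially the same route as the paper: both differentiate the identity $K_\bt{}^{[\g\dl]}(p,p)=0$ with respect to the base point, use the normalization (\ref{eprkif}) to evaluate the base-point derivative at $x=p$, and exploit the fact that such derivatives (limits of linear combinations of Killing forms) are again Killing forms to produce forms with arbitrary prescribed values at $p$, i.e.\ satisfying (\ref{ehocok}). The only difference is cosmetic: the paper contracts $\g$ with the derivative index to obtain $K_\al=\frac{a_\g}{n-1}\,\pl K_\al{}^{[\g\dl]}/\pl p^\dl$ directly, while you keep a general direction $v$ and conclude the spanning by an annihilator argument on finite difference quotients.
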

\begin{proof}
If a manifold is isotropic, the Killing forms $K^{[\g,\dl]}(x,p)$ and
$K^{[\g,\dl]}(x,p+dp)$ satisfy Eqns (\ref{eprkif}) in some neighborhoods of the
respective points $p$ and $p+dp$. Their arbitrary linear combination and
therefore arbitrary linear combination of derivatives
\begin{equation*}
  c^\al\frac{\pl K_\bt{}^{[\g\dl]}(x,p)}{\pl p^\al}
  :=c^\al\underset{dp^\al\to 0}\lim\frac
  {K_\bt{}^{[\g,\dl]}(x,p+dp)-K_\bt{}^{[\g,\dl]}(x,p)}{dp^\al}
\end{equation*}
are Killing forms for arbitrary constants $c^\al$. We differentiate the Killing
form $K^{[\g\dl]}$ with respect to $x$ at the point $p$. From the first relation
in (\ref{eprkif}), it follows that
\begin{equation*}
  \frac\pl{\pl p^\al}K_\bt{}^{[\g\dl]}(p,p)
  =\left.\frac{\pl K_\bt{}^{[\g\dl]}(x,p)}{\pl x^\al}\right|_{x=p}
  +\left.\frac{\pl K_\bt{}^{[\g\dl]}(x,p)}{\pl p^\al}\right|_{x=p}=0.
\end{equation*}
Using the second condition in (\ref{eprkif}), we obtain the equality
\begin{equation*}
  \left.\frac{\pl K_\bt{}^{[\g\dl]}(x,p)}{\pl p^\al}\right|_{x=p}=
  -\dl_{\al\bt}^{\g\dl}+\dl_{\al\bt}^{\dl\g}.
\end{equation*}
Now, from $K^{[\g\dl]}$ we can build Killing forms that take arbitrary values
$dx^\al a_\al$ at point $p$, where $a_\al\in\MR$. For this, it is sufficient to
assume that
\begin{equation*}
  K_\al:=\frac{a_\g}{n-1}\frac{\pl K_\al{}^{[\g\dl]}(x,p)}{\pl p^\dl}.
\end{equation*}
By choosing appropriate constants $a_\g$, we find a set of Killing forms
satisfying equalities (\ref{ehocok}).
\end{proof}
Thanks to this theorem, it suffices to use the term ``isotropic universe''.
However, we prefer to call it ``homogeneous and isotropic'', because this name
emphasizes important physical properties.
\begin{theorem}                                                   \label{tkildi}
The Lie algebra of infinitesimal isometries $\Gi(\MM)$ of a connected Riemannian
(pseudo-Riemannian) manifold $\MM$ has the dimension not exceeding
$\frac12n(n+1)$, where $n:=\dim\MM$. If the dimensional is maximal,
$\dim\Gi(\MM)=\frac12n(n+1)$, then the manifold $\MM$ is homogeneous and
isotropic, being a constant-curvature space.
\end{theorem}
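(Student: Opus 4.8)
The plan is to prove the three assertions separately: the dimension bound, then homogeneity and isotropy in the maximal case, and finally the constant-curvature property. First I would read off the bound $\dim\Gi(\MM)\le\frac12 n(n+1)$ directly from representation (\ref{ekilde}). Consider the linear map sending a Killing form $K$ to its ``initial data'' at a fixed point $p$: the values $K_\al(p)$ together with the antisymmetrized first derivatives $\pl_{[\bt}K_{\al]}(p)$. The symmetric part of $\pl_\bt K_\al(p)$ is not independent data, since Killing equation (\ref{ekileq}) expresses it through $K_\al(p)$ and the Christoffel symbols; hence the free data consist of the $n$ components $K_\al(p)$ and the $\frac12 n(n-1)$ independent components of the antisymmetric part, $\frac12 n(n+1)$ in total. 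By (\ref{ekilde}), which holds on all of the connected manifold $\MM$, a Killing form is uniquely determined by this data, so the map is injective and the bound follows.

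Next, in the maximal case $\dim\Gi(\MM)=\frac12 n(n+1)$ the above injection becomes an isomorphism onto the full data space, and this holds at every point $p$, since the bound argument applies verbatim at each point. Surjectivity then supplies Killing forms realizing any prescribed initial data. Choosing data $K_\al(p)=\dl_\al^\g$ with vanishing antisymmetric derivative yields the $n$ forms satisfying (\ref{ehocok}), so $\MM$ is homogeneous at $p$; choosing $K_\al(p)=0$ with antisymmetric derivative equal to $\dl_{\al\bt}^{\g\dl}-\dl_{\al\bt}^{\dl\g}$ yields the $\frac12 n(n-1)$ forms satisfying (\ref{eprkif}), so $\MM$ is isotropic at $p$. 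As $p$ is arbitrary, $\MM$ is homogeneous and isotropic (by Proposition \ref{pisoho} isotropy alone already forces homogeneity).

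It remains to deduce constant curvature, and here the integrability relation (\ref{erelkd}) is the tool. At a fixed point $p$, maximal symmetry lets me prescribe $K_\e(p)$ and the antisymmetric derivative $\nb_{[\z}K_{\e]}(p)$ independently and arbitrarily, the symmetric part of $\nb_\z K_\e$ vanishing by the Killing equation. Substituting into (\ref{erelkd}) and exploiting this independence, I would separate two conditions: setting $K_\e(p)=0$ with arbitrary antisymmetric $\nb_\z K_\e$ forces the part of the coefficient tensor on the left that is antisymmetric in the two indices contracted with $\nb_\z K_\e$ to vanish, while setting the derivative to zero with arbitrary $K_\e(p)$ forces $\nb_\g R_{\al\bt\dl}{}^\e=\nb_\dl R_{\al\bt\g}{}^\e$. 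The main obstacle is the purely algebraic computation extracting the shape of the curvature from the first condition; using the symmetries of $R_{\al\bt\g}{}^\dl$ and suitable contractions, it should collapse to the constant-curvature form
\begin{equation*}
  R_{\al\bt\g\dl}=\frac{R}{n(n-1)}\left(g_{\al\g}g_{\bt\dl}-g_{\al\dl}g_{\bt\g}\right),
\end{equation*}
where $R$ is the scalar curvature, a priori a function of $x$.

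Finally I would invoke Schur's lemma to remove the $x$-dependence. Contracting the displayed form gives $R_{\al\bt}=(n-1)\,\frac{R}{n(n-1)}\,g_{\al\bt}$, and feeding this into the contracted second Bianchi identity $\nb^\al R_{\al\bt}=\frac12\nb_\bt R$ (equivalently, into the second condition above) yields $(n-1)\left(1-\tfrac n2\right)\nb_\bt R=0$, so $\nb_\bt R=0$ and $R=\const$ for $n\ge3$. Hence $\MM$ is a constant-curvature space, which completes the proof.
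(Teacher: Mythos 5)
Your proposal is correct and follows essentially the same route as the paper: the dimension bound from the fact that a Killing form is linearly determined by its value and antisymmetrized first derivative at one point, surjectivity onto that initial data in the maximal case, the splitting of the integrability condition (\ref{erelkd}) into an algebraic curvature constraint plus the derivative identity, and then Schur's lemma via the contracted Bianchi identity. The only point you leave open is $n=2$, where your Bianchi coefficient $\left(1-\tfrac n2\right)$ degenerates; the paper closes this case by contracting the identity $\nb_\g R_{\al\bt\dl}{}^\e=\nb_\dl R_{\al\bt\g}{}^\e$ to obtain $\pl_\g R=0$ directly.
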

\begin{proof}
The dimension of the Lie algebra $\Gi(\MM)$ is equal to the maximal number of
linearly independent Killing vector fields on the manifold $\MM$. From Eqn
(\ref{ekildr}), it follows that the number $\Sn$ of linearly independent Killing
vectors cannot exceed the number of independent components of Killing forms
$\lbrace K_\al(p)\rbrace$ and their exterior derivatives
$\lbrace \pl_\bt K_\al(p)-\pl_\al K_\bt(p)\rbrace$ at a fixed point $p\in\MM$.
The number of independent components of an arbitrary 1-form does not exceed $n$,
and the number of independent components of its exterior derivative cannot
exceed $\frac12n(n-1)$. Thus, we find a restriction of the dimension of the Lie
algebra of isometries generated by Killing vector fields:
\begin{equation*}
  \dim\Gi(\MM)\le n+\frac12n(n-1)=\frac12n(n+1).
\end{equation*}
This proves the first statement of the theorem. The real analyticity is
important here, because it was used to obtain representation (\ref{ekildr}).

The connectedness of the manifold $\MM$ guarantees that a number of Killing
vector fields is defined unambiguously. If $\MM$ has several connected
components, the number of independent Killing vector fields may depend on a
particular component.

There are at most $\frac12n(n+1)$ independent Killing vector fields on
homogeneous and isotropic manifolds. By Eqn (\ref{ekildr}) they define all
possible Killing vector fields on the manifold $\MM$. Consequently, if the
manifold has the maximal number of independent Killing fields, it is necessarily
homogeneous and isotropic.

We now prove that any homogeneous and isotropic manifold is a constant-curvature
space. If a manifold is homogeneous and isotropic, for any point $x\in\MM$
there are Killing forms such that $K_\al(x)=0$, while their derivatives
$\nb_\bt K_\al(x)$ can be arranges into antisymmetric matrix. As a consequence,
an antisymmetric coefficient at $\nb_\z K_\e$ in Eqn (\ref{erelkd}) must be
zero. It follows that
\begin{equation}                                                  \label{erecuk}
   R_{\al\bt\g}{}^\e\dl_\dl^\z-R_{\al\bt\dl}{}^\e\dl_\g^\z
  +R_{\g\dl\al}{}^\e\dl_\bt^\z-R_{\g\dl\bt}{}^\e\dl_\al^\z
  =R_{\al\bt\g}{}^\z\dl_\dl^\e-R_{\al\bt\dl}{}^\z\dl_\g^\e
  +R_{\g\dl\al}{}^\z\dl_\bt^\e- R_{\g\dl\bt}{}^\z\dl_\al^\e.
\end{equation}
If the space is homogeneous and isotropic, then for any point $x\in\MM$ there
are Killing forms taking arbitrary values at this point. From Eqns
(\ref{erelkd}) and (\ref{erecuk}), it follows that
\begin{equation}                                                  \label{erecud}
  \nb_\g R_{\al\bt\dl}{}^\e=\nb_\dl R_{\al\bt\g}{}^\e.
\end{equation}
In Eqn (\ref{erecuk}), we contract the indices $\dl$ and $\z$ and then lower the
upper index. As a result, we find the curvature tensor expressed in terms of the
Ricci tensor and the metric
\begin{equation}                                                  \label{ecurik}
  (n-1)R_{\al\bt\g\dl}=R_{\bt\dl}g_{\al\g}-R_{\al\dl}g_{\bt\g}.
\end{equation}
Because the right-hand side of formula (\ref{ecurik}) has to be antisymmetric in
$\dl$ and $\g$, there is an additional restriction
\begin{equation*}
  R_{\bt\dl}g_{\al\g}-R_{\al\dl}g_{\bt\g}=
  -R_{\bt\g}g_{\al\dl}+ R_{\al\g}g_{\bt\dl}.
\end{equation*}
Contracting the indices $\bt$ and $\g$ yields a relation between the Ricci
tensor and the scalar curvature
\begin{equation}                                                  \label{ericsk}
  R_{\al\dl}=\frac1n Rg_{\al\dl},
\end{equation}
where $R:=g^{\al\bt}R_{\al\bt}$ is the scalar curvature. Substituting the above
relation in equality (\ref{ecurik}) results in the following expression for the
full curvature tensor:
\begin{equation}                                                  \label{ecumek}
  R_{\al\bt\g\dl}=\frac R{n(n-1)}
  \left(g_{\al\g}g_{\bt\dl}-g_{\al\dl}g_{\bt\g}\right).
\end{equation}

Now, to complete the proof, we have to show that the scalar curvature $R$ of a
homogeneous and isotropic space is constant. For this, we use the contracted
Bianchi identity
\begin{equation*}
  2\nb_\bt R_\al{}^\bt-\nb_\al R=0.
\end{equation*}
Substituting formula (\ref{ericsk}) for the Ricci tensor in this identity yields
the equation
\begin{equation*}
  \left(\frac2n-1\right)\pl_\al R=0.
\end{equation*}
For $n\ge3$, it follows that $R=\const$.

The case $n=2$ is to be considered separately. Contracting the indices $\bt$ and
$\e$ in Eqn (\ref{ecurik}) yields the equality
\begin{equation*}
  \nb_\g R_{\al\dl}-\nb_\dl R_{\al\g}=0.
\end{equation*}
Then, contracting with $g^{\al\dl}$ and using relation (\ref{ericsk}) yields the
equation $\pl_\g R=0$, and hence, $R=\const$ also in the case $n=2$.

Thus, the scalar curvature in (\ref{ecumek}) has to be constant, $R=\const$, and
therefore a maximally symmetric Riemannian (pseudo-Riemannian) manifold is a
constant-curvature space.
\end{proof}

\begin{exa}{\rm
We consider the Euclidean space $\MR^n$ with a zero-curvature metric, i.e.,
$R_{\al\bt\g\dl}=0$. This space obviously has a constant curvature. It follows
that there is a coordinate system $x^\al$, $\al=1,\dotsc,n$ such that all metric
components are constant. The Christoffel symbols in this coordinate system are
zero. Equation (\ref{emaeqk}) for Killing vector fields takes the simple form
\begin{equation*}
  \pl^2_{\bt\g}K_\al=0.
\end{equation*}
The general solution of this equation is linear in coordinates:
\begin{equation*}
  K_\al(x)=a_\al+b_{\al\bt}x^\bt,
\end{equation*}
where $a_\al$ and $b_{\al\bt}$ are some constants. It follows from Killing
equation (\ref{ekileq}) that this expression defines the Killing form if and
only if the matrix $b_{\al\bt}$ is anisymmetric, i.e., $b_{\al\bt}=-b_{\bt\al}$.
Therefore, we can define $\frac12n(n+1)$ linearly independent Killing forms:
\begin{align*}
  K_\al{}^{(\g)}(x)&=\dl_\al^\g,
\\
  K_\al{}^{[\g\dl]}(x)&=\dl_\al^\dl x^\g-\dl_\al^\g x^\dl.
\end{align*}
Hence, an arbitrary Killing form is the linear combination
\begin{equation*}
  K_\al=a_\g K_\al{}^{(\g)}+\frac12b_{\dl\g}K_\al{}^{[\g\dl]}.
\end{equation*}
Here, the $n$ Killing vectors $K^{(\g)}$ generate translations along coordinate
axes in $\MR^n$, while the $\frac12n(n-1)$ Killing vectors $K^{[\g\dl]}$
generate rotations around the origin. Thus, a zero-curvature metric has the
maximal number $\frac12n(n+1)$ of Killing vectors, and therefore the space is
homogeneous and isotropic.

It is known that the metric can be diagonalized by linear coordinate
transformations, such that the main diagonal elements are $\pm1$, depending on
the metric signature. If the metric is Riemannian (positive definite), it can be
mapped into the standard form $g_{\al\bt}=\dl_{\al\bt}$. This metric is
invariant under the inhomogeneous rotation group $\MI\MO(n)$.}
\qed\end{exa}

We have proved that a homogeneous and isotropic space has constant curvature.
The converse is also true. This can be formulated in several steps.
\begin{theorem}
Let $(\MM,g)$ be a Riemannian (pseudo-Riemannian) space of constant curvature
with the scalar curvature tensor like (\ref{ecumek}), where $R=\const$ is the
scalar curvature. We assume that the metric signature is $(p,q)$. Then, in some
neighborhood of a point $x\in\MM$, there is a coordinate system (stereographic
coordinates) such that the metric is given by
\begin{equation}                                                  \label{qbcvdy}
  ds^2=\frac{\eta_{\al\bt}dx^\al dx^\bt}{\left(1-\frac{Rx^2}8\right)^2},
\end{equation}
where
\begin{equation*}
  \eta:=\diag(\underbrace{+\dotsc+}_p\underbrace{-\dotsc-}_q),\qquad
  x^2:=\eta_{\al\bt}x^\al x^\bt.
\end{equation*}
\end{theorem}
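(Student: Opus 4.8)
The statement is a local normal-form result, so the plan is to split it into two independent pieces. First, I would show that the explicit metric (\ref{qbcvdy}) is itself a space of constant curvature whose curvature tensor has exactly the form (\ref{ecumek}) with the prescribed constant $R$ and signature $(p,q)$. Second, I would prove a local uniqueness statement: any two constant-curvature (pseudo-)Riemannian metrics of the same dimension, signature, and scalar curvature are locally isometric. Combining the two gives the theorem, because ``locally isometric to the model (\ref{qbcvdy})'' means precisely that there is a coordinate system in which the abstract metric $g$ equals (\ref{qbcvdy}), and that chart is the one called stereographic.

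For the first piece I would exhibit (\ref{qbcvdy}) as conformally flat and compute its curvature directly. Writing $g_{\al\bt}=\Om^2\eta_{\al\bt}$ with conformal factor $\Om:=(1-Rx^2/8)^{-1}$ and $x_\al:=\eta_{\al\bt}x^\bt$, the base metric $\eta_{\al\bt}$ is flat, so the entire curvature is produced by derivatives of $\log\Om$. Using the standard transformation of the Christoffel symbols and the Riemann tensor under a conformal rescaling, together with the explicit first and second derivatives of $\log\Om$ (one has $\pl_\al\log\Om=\tfrac14 Rx_\al\,\Om$), a direct substitution collapses the result to
\begin{equation*}
  R_{\al\bt\g\dl}=\frac R{n(n-1)}\left(g_{\al\g}g_{\bt\dl}-g_{\al\dl}g_{\bt\g}\right),
\end{equation*}
which is (\ref{ecumek}) with scalar curvature $R=\const$; the coefficient $1/8$ in the denominator is exactly what is needed for the constant to come out as $R$ and not a multiple of it. The same metric arises, more geometrically, by realizing the space as a quadric $\eta_{AB}X^AX^B=\const$ in a flat space of one higher dimension with appropriately augmented signature and projecting stereographically from a pole; this explains the name of the coordinates, but the conformal computation is the quickest rigorous verification.

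For the second piece I would pass to Riemann normal (geodesic) coordinates centered at a point $p$. In such coordinates every Taylor coefficient of $g_{\al\bt}$ at the origin is a universal polynomial in the curvature tensor and its iterated covariant derivatives evaluated at $p$. For a constant-curvature space $R_{\al\bt\g\dl}$ is given by (\ref{ecumek}), and since $R=\const$ and $\nb g=0$ we get $\nb_\e R_{\al\bt\g\dl}=0$, so all covariant derivatives of the curvature vanish. Consequently every normal-coordinate Taylor coefficient is fixed by $n$, the signature, and $R$ alone, and the normal-coordinate expansion of any constant-curvature metric with these data is one and the same. This forces any two such metrics to be locally isometric, and applying it to $g$ and to the model (\ref{qbcvdy}) closes the argument.

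The hardest part is making this uniqueness step fully rigorous. The Taylor-series form presupposes real analyticity, so that the coefficients genuinely reconstruct the metric; for constant-curvature spaces analyticity does hold in normal coordinates, but it must be invoked rather than tacitly assumed. A cleaner, signature-robust route is to avoid series altogether and integrate the radial Jacobi equation along the geodesics issuing from $p$: constancy of the sectional curvature $K=R/[n(n-1)]$ reduces it to $\ddot y+Ky=0$, whose explicit solutions (trigonometric, hyperbolic, or linear according to the sign of $R$) determine the metric in geodesic normal form, after which one coordinate change produces (\ref{qbcvdy}). The subtlety is that for indefinite signature the decomposition into geodesic ``spheres'' is not available, so one must work with the full normal-coordinate structure equations rather than a naive polar ansatz; I expect this ODE-along-geodesics formulation to be the robust way to close the gap.
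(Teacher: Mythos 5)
Your proposal is correct in outline, but it takes a genuinely different route from the paper, which in fact does not prove this theorem at all: the proof environment consists of a citation to Theorem 2.4.12 of Wolf's \emph{Spaces of constant curvature}, followed (outside the proof) by the quadric construction --- embedding $u^2+\eta_{\mu\nu}x^\mu x^\nu=b$ into $\MR^{p+1,q}$, passing to spherical coordinates, eliminating $u$, and rescaling the radial coordinate to reach the conformally flat form. That construction is exactly the ``more geometric'' realization you mention in passing, and it only establishes the easy direction (the model metric exists and has constant curvature); the hard direction --- that \emph{every} constant-curvature metric is locally isometric to the model --- is delegated to Wolf. Your two-part plan supplies precisely that missing half: the conformal-rescaling computation replaces the paper's embedding calculation for the first part, and your uniqueness argument (normal-coordinate Taylor coefficients determined by the curvature and its vanishing covariant derivatives, or equivalently the Cartan-type argument via the radial Jacobi/structure equations) is the substance of Wolf's theorem. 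You are right to flag analyticity and the indefinite-signature polar decomposition as the delicate points; the ODE-along-geodesics formulation you propose is the standard robust way to handle both, so the sketch closes rather than opens a gap.

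One caveat worth checking before you rely on it: your claim that the coefficient $1/8$ makes the constant ``come out as $R$'' in the normalization of (\ref{ecumek}) is convention-sensitive. Carrying out your own conformal computation with $\Om=(1-Rx^2/8)^{-1}$ gives $W_{\al\bt}=-\tfrac{R}{4}g_{\al\bt}$ for the Schouten-type tensor and hence a curvature tensor proportional to $-\tfrac{R}{2}\left(g_{\al\g}g_{\bt\dl}-g_{\al\dl}g_{\bt\g}\right)$, i.e.\ sectional curvature $-R/2$. This matches the paper's own statement $R=-2/b$ for the quadric, but it agrees with the normalization $R/[n(n-1)]$ of (\ref{ecumek}) only for $n=2$ (or under a different sign convention for $R_{\al\bt\g}{}^\dl$). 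The discrepancy is inherited from the paper, but since you assert the factor explicitly, you should either fix the convention or adjust the denominator to $1-\tfrac{Rx^2}{4n(n-1)}$ in your verification.
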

\begin{proof}
(See, i.e., Theorem 2.4.12 in \cite{Wolf72}).
\end{proof}

If $R=0$, then the full curvature tensor (\ref{ecumek}) is also zero. It
follows that a zero-curvature space is locally isomorphic to the Euclidean
(pseudo-Euclidean) space $\MR^{p,q}$, and formula (\ref{qbcvdy}) holds.

We consider the case $R\ne0$. Metric (\ref{qbcvdy}) is the induced metric on the
sphere $\MS^{p+q}$ or the hyperboloid $\MH^{p+q}$ embedded into the
larger-dimension pseudo-Euclidean space $\MR^{p+1,q}$. Indeed, let $u,x^\al$ be
Cartesian coordinates in $\MR^{p+1,q}$. The metric then takes the form
\begin{equation}                                                  \label{qxbvct}
  ds^2:=du^2+\eta_{\mu\nu}dx^\mu dx^\nu.
\end{equation}
We consider the sphere (hyperboloid) embedded into the Euclidean
(pseudo-Euclidean) space $\MR^{p+1,q}$ by means of the equation
\begin{equation}                                                  \label{qbdftr}
  u^2+\eta_{\mu\nu}x^\mu x^\nu=b,\qquad b=\const\ne0.
\end{equation}

To simplify the calculations, we ignore the signs and domains of the radicand,
which depend on the constant $b$ and the signature of the metric
$\eta_{\mu\nu}$. Both signs and signature can be properly dealt with in each
particular case.

We introduce spherical coordinates
$\lbrace x^\al\rbrace\mapsto\lbrace r,\chi^1,\dotsc,\chi^{p+q-1}\rbrace$, where
$r$ is the radial coordinate and $\chi$ denotes angular coordinates in the
Euclidean (pseudo-Euclidean) space $\MR^{p,q}\subset\MR^{p+1,q}$. Then metric
(\ref{qxbvct}) and embedding equation (\ref{qbdftr}) take the form
\begin{align}                                                     \label{qxncpo}
  &ds^2=du^2+dr^2+r^2d\Om,
\\                                                                \label{qbvcoe}
  &u^2+r^2=b,
\end{align}
where $d\Om(\chi,d\chi)$ is the angular part of the Euclidean metric (whose
explicit form is not important here). Equation (\ref{qbvcoe}) yields the
relations
\begin{equation*}
  u=\pm\sqrt{b-r^2}\qquad\Rightarrow\qquad
  du=\mp\frac{rdr}{\sqrt{b-r^2}}.
\end{equation*}
Substituting $du$ in (\ref{qxncpo}) yields the induced metric
\begin{equation}                                                  \label{qbcvft}
  ds^2=\frac{bdr^2}{b-r^2}+r^2d\Om.
\end{equation}
Now, we transform the radial coordinate $r\mapsto\rho$ as
\begin{equation*}
  r:=\frac\rho{1+\frac{\rho^2}{4b}}\qquad\Rightarrow\qquad
  dr=\frac{1-\frac{\rho^2}{4b}}{\left(1+\frac{\rho^2}{4b}\right)^2}.
\end{equation*}
Then the induced metric takes the conformally Euclidean (pseudo-Euclidean) form
\begin{equation*}
  ds^2=\frac{d\rho^2+\rho^2d\Om}{\left(1+\frac{\rho^2}{4b}\right)^2}.
\end{equation*}
Returning to the Cartesian coordinates
$\lbrace\rho,\chi^1,\dotsc,\chi^{p+q-1}\rbrace\mapsto\lbrace x^\al\rbrace$,
we find metric (\ref{qbcvdy}), where
\begin{equation*}
  R=-\frac2b.
\end{equation*}

The above construction shows that the metric on a constant-curvature space is
locally isometric to that of either the Euclidean (pseudo-Euclidean) space
($R=0$), or to the sphere $\MS^{p+q}$, or to the hyperboloid $\MH^{p+q}$
depending on the metric signature and the sign of the scalar curvature.

Euclidean (pseudo-Euclidean) metric (\ref{qxbvct}) and the hypersurfaces defined
by Eqn (\ref{qbdftr}) are invariant under the rotation group $\MO(p+1,q)$
transformations. Hence,
\begin{equation*}
  \dim\MO(p+1,q)=\frac{n(n+1)}2,\qquad n:=p+q,
\end{equation*}
and, in accordance with Theorem \ref{tkildi}, the number of independent Killing
vectors is maximal and the space of constant curvature is homogeneous and
isotropic.
\section{Symmetric tensors on a constant-curvature space         \label{bvxeem}}
It was shown is Section \ref{sxzlpf} that a homogeneous and isotropic
$n$-dimensional manifold is necessarily a constant-curvature space with
maximal number $n(n+1)/2$ of linearly independent Killing vector fields. Such
spaces are common in applications. Moreover, they can carry other tensor fields,
for example, matter fields in general relativity. In order to have a symmetric
model, it is necessary to impose the symmetry condition on both the metric and
other fields. In this section, we find conditions such that the simplest tensor
fields on a constant-curvature space are also homogeneous and isotropic.

Let
\begin{equation*}
  T=dx^\al\otimes\dotsc\otimes dx^\bt\, T_{\al\dotsc\bt}.
\end{equation*}
be an arbitrary tensor field on a constant-curvature space $\MS$. To be
specific, we consider covariant tensor fields. We assume that an isometry
$\imath:~x\mapsto x'$ is given. Then the requirement that a given tensor field
be symmetric with respect to the isometry group has the same form as for the
metric (\ref{eisoma}):
\begin{equation*}
  T(x)=\imath^*T(x'),
\end{equation*}
where $\imath^*$ is the map of differential forms. This condition has the
component form
\begin{equation}                                                  \label{qmnsuh}
  T_{\al\dotsc\bt}(x)=\frac{\pl x^{\prime\g}}{\pl x^\al}\dotsc
  \frac{\pl x^{\prime\dl}}{\pl x^\bt}T_{\g\dotsc\dl}(x').
\end{equation}
Let an infinitesimal isometry be generated by a Killing field $K=K^\al\pl_\al$.
Then symmetry condition (\ref{qmnsuh}) means that the corresponding Lie
derivative vanishes:
\begin{equation}                                                  \label{qmfkdf}
  \Lie_K T=0.
\end{equation}

The same symmetry condition must be satisfied for any tensor field with both
covariant and contravariant indices.

We now consider the simplest cases common in applications.
\begin{exa}{\rm
Let a differentiable scalar field $\vf(x)\in\CC^1(\MS)$ (function) be given on a
constant-curvature space $\MS$. The vanishing Lie derivative condition then
takes the form
\begin{equation*}
  K^\al(x)\pl_\al\vf(x)=0.
\end{equation*}
An invariant scalar field must be constant, $\vf=\const$, on the whole $\MS$
because the Killing vector field components $K^\al(x)$ can take arbitrary values
at any point $x\in\MS$. Thus, a homogeneous and isotropic field on a
constant-curvature space $\MS$ must be constant: $\vf(x)=\const$ for all
$x\in\MS$.}
\qed\end{exa}
\begin{exa}{\rm
We consider a differentiable covector field $A=dx^\al A_\al$. Then invariance
condition (\ref{qmfkdf}) takes the form
\begin{equation*}
  K^\bt\pl_\bt A_\al+\pl_\al K^\bt A_\bt=0.
\end{equation*}
We choose Killing vectors such that the equality $K^\bt(x)=0$ is satisfied at an
arbitrary but fixed point $x\in\MS$. Moreover, Killing vectors can be chosen
such that the partial derivatives $\pl_\bt K_\al$ are arbitrary and
antisymmetric at a given point. Because $\pl_\al K^\bt=\nb_\al K^\bt$ at a given
point, the equalities
\begin{equation*}
  \pl_\al K^\bt A_\bt=\pl_\al K_\bt A^\bt=\pl_\g K_\bt(\dl^\g_\al A^\bt)
\end{equation*}
hold. It follows that
\begin{equation*}
  \dl_\al^\g A^\bt=\dl_\al^\bt A^\g,
\end{equation*}
because the construction works for any point of $\MS$. Contracting the indices
$\al$ and $\g$ yields the equality
\begin{equation*}
  nA^\bt=A^\bt.
\end{equation*}
Thus, except for the trivial case $n=1$, lowering the index yields $A_\al=0$.
Consequently, if a covector field is homogeneous and isotropic, it vanishes
identically.

The same is true for vector fields $X=X^\al\pl_\al$: a homogeneous and isotropic
vector field on a constant-curvature space $\MS$ necessarily vanishes.}
\qed\end{exa}
\begin{exa}{\rm                                                   \label{eabhhf}
As a third example, we consider a differentiable second-rank covariant tensor
$T_{\al\bt}$. We assume no symmetry in the indices $\al$ and $\bt$. The Lie
derivative of a second-rank tensor is given by
\begin{equation*}
  \Lie_K T_{\al\bt}=K^\g\pl_\g T_{\al\bt}+\pl_\al K^\g T_{\g\bt}
  +\pl_\bt K^\g T_{\al\g}.
\end{equation*}
As in the preceding case, we choose the Killing vector such that the relation
$K^\g(x)=0$ is satisfied at a point $x\in\MS$ and the partial derivatives
$\pl_\al K_\bt$ are antisymmetric. Then, equating the Lie derivative to zero, we
find
\begin{equation*}
  \dl_\al^\dl T^\g{}_\bt+\dl_\bt^\dl T_\al{}^\g=\dl_\al^\g T^\dl{}_\bt
  +\dl_\bt^\g T_\al{}^\dl.
\end{equation*}
Contracting the indices $\al$ and $\dl$ and lowering $\g$ yields
\begin{equation*}
  (n-1)T_{\g\bt}+T_{\bt\g}=g_{\bt\g}T,\qquad T:=T_\al{}^\al.
\end{equation*}
Transposing the indices $\bt$ and $\g$ and subtracting the resulting expression,
we obtain
\begin{equation*}
  (n-2)(T_{\g\bt}-T_{\bt\g})=0.
\end{equation*}

It follows that when $n\ne2$, an invariant second-rank tensor is symmetric.
Using this symmetry, we find that
\begin{equation*}
  T_{\al\bt}=\frac Tng_{\al\bt}.
\end{equation*}
Because the trace $T$ is a scalar, it must be constant by symmetry arguments
from the first example. Therefore, a homogeneous and isotropic second-rank
tensor on a constant-curvature space is given by
\begin{equation}                                                  \label{qmkujh}
  T_{\al\bt}=Cg_{\al\bt},\qquad C=\const.
\end{equation}

This formula holds for $n\ge3$, and for the symmetric part, at $n=2$.

In the two-dimensional case, a homogeneous and isotropic covariant tensor can
have an antisymmetric part proportional to the totally antisymmetric second-rank
tensor $\ve_{\al\bt}=-\ve_{\bt\al}$:
\begin{equation*}
  T_{[\al\bt]}=-T_{[\bt\al]}=C\ve_{\al\bt},
\end{equation*}
if the symmetry under space reflections is disregarded. The sign of the
antisymmetric tensor changes under reflections:
$\ve_{\al\bt}\mapsto-\ve_{\al\bt}$. Therefore, a homogeneous and isotropic
tensor invariant under reflections in two dimensions have the same form
(\ref{qmkujh}) as in the higher-dimensional case.

Homogeneous and isotropic contravariant second-rank tensors and mixed-symmetry
tensors
\begin{equation*}
  T^{\al\bt}=Cg^{\al\bt},\qquad T^\al{}_\bt=C\dl^\al_\bt.
\end{equation*}
can be considered along the same lines. The resulting expressions for
homogeneous and isotropic tensors are used in cosmological models, where
$T_{\al\bt}$ plays the role of the matter stress-energy tensor.}
\qed\end{exa}
\section{Manifolds with maximally symmetric submanifolds         \label{seqway}}
In many physical applications, for example, in cosmology, a Riemannian
(pseudo-Rieman\-nian) manifold $\MM$, $\dim\MM=n$, is a topological product of two
manifolds, $\MM=\MR\times\MS$, where $\MR$ is the real line identified with
the time, and $\MS$ is a constant-curvature space. For any $t\in\MR$, there is a
submanifold $\MS\subset\MM$. Because $\MS$ is a constant-curvature space, it is
homogeneous and isotropic. The corresponding isometry group is generated by
$n(n-1)/2$ Killing vectors $\MS$, where $n:=\dim\MM$. In this section, we find
the most general form of the metric on $\MM$ that is invariant under the
transformation group generated by the isometry group on the submanifold $\MS$.

Let $x^\mu$, $\mu=1,\dotsc,n-1$ be coordinates on the constant-curvature space
$\MS$. Then the metric on $\MS$ is $\overset\circ g_{\mu\nu}(x)$. By
construction, it is invariant under the isometry group generated by the Killing
fields $K_i=K^\mu_i(x)\pl_\mu$, $i=1,\dotsc,n(n-1)/2$.

We suppose that a sufficiently smooth metric $g$ of the Lorentzian signature is
defined on the whole $\MM=\MR\times\MS$, and $t\in\MR$ is the time coordinate,
i.e., $g_{00}>0$. We also suppose that all $t=\const$ sections are space-like.
Moreover, we assume that the restriction of the metric $g$ to $\MS$ coincides
with $\overset\circ g_{\mu\nu}$ for any fixed time. Obviously, such a metric has
the form
\begin{equation}                                                  \label{qawesr}
  g_{\al\bt}=\begin{pmatrix}
  g_{00} & g_{0\nu} \\ g_{\mu0} &
  h_{\mu\nu} \end{pmatrix},
\end{equation}
where $g_{00}(t,x)$ and $g_{0\mu}(t,x)=g_{\mu0}(t,x)$ are arbitrary functions of
$t$ and $x$, and $h_{\mu\nu}(t,x)$ is a constant-curvature metric on $\MS$,
where $t$ is a parameter. All the metric components are supposed to be
sufficiently smooth with respect to both $t$ and $x$. The matrix
\begin{equation*}
  h_{\mu\nu}-\frac{g_{0\mu}g_{0\nu}}{g_{00}}
\end{equation*}
is negative definite, because the metric $g_{\al\bt}$ is of the Lorentzian
signature. Moreover, the matrix $h_{\mu\nu}$ is also negative definite by
construction.

First of all, we continue the action of the isometry group from $\MS$ to the
whole $\MM$ as follows. We suppose that the Killing vector field components
$K_i^\mu(t,x)$ parametrically depend on $t$. We define the action of
infinitesimal isometries on $\MM$ by the relations
\begin{equation}                                                  \label{qnjiis}
\begin{split}
  t&\mapsto t^\prime=t,
\\
  x^\mu&\mapsto x^{\prime\mu}=x^\mu+\e K^\mu,\qquad\e\ll1,
\end{split}
\end{equation}
where $K$ is an arbitrary Killing vector from the Lie algebra generated by the
vectors $K_i$. In other words, the isometry transformations do not shift points
on the real axis $t\in\MR\subset\MM$. This means that Killing vectors are
continued to the whole $\MM$ such that the extra components is absent:
$K^0\pl_0=0$. The continuation is nontrivial if the Killing vector fields become
parametrically dependent on $t$. The resulting Lie algebra of Killing vector
fields continued to $\MM$ is the same.
\begin{exa}{\rm
In four dimensions, Killing vector fields continued to the whole
$\MM=\MR\times\MS$ generate the isometry group $(\MM,\MG)$, where
\begin{equation*}
  \MG=\begin{cases} \MS\MO(4), & \MS=\MS^3\quad\text{-- sphere}, \\
  \MI\MS\MO(3), & \MS=\MR^3\quad \text{-- Euclidean space}, \\
  \MS\MO(3,1), & \MS=\MH^3\quad \text{-- two-sheeted hyperboloid}.
\end{cases}
\end{equation*}
This example is important in cosmology.}
\qed\end{exa}

We can now define a homogeneous and isotropic space-time.
\begin{defn}
A space-time $(\MM,g)$ is called {\em homogeneous and isotropic} if
\newline
(1) \parbox[t]{.95\linewidth}{the manifold is the topological product
$\MM=\MR\times\MS$, where $\MR$ is the time axis and $\MS$ is a
three-dimensional constant-curvature space endowed with negative-definite
metric;}\newline
(2)  \parbox[t]{.95\linewidth}{the metric $g$ is invariant under transformations
(\ref{qnjiis})generated by the isometry group of $\MS$.\qed}
\end{defn}

We find the most general form of a homogeneous and isotropic metric of the
universe.
\begin{theorem}                                                   \label{tnbdht}
Let metric (\ref{qawesr}) on $\MM=\MR\times\MS$ be sufficiently smooth and
invariant under transformations (\ref{qnjiis}). Then, in some neighborhood of
any point, a coordinate system exists such that the metric is block-diagonal
\begin{equation}                                                  \label{qvsear}
  ds^2=dt^2+h_{\mu\nu}dx^\mu dx^\nu,
\end{equation}
where $h_{\mu\nu}(t,x)$ is a constant-curvature metric on $\MS$ for all
$t\in\MR$. Moreover, the Killing vector field components are independent of
time.
\end{theorem}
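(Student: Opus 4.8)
The plan is to write out the invariance condition $\Lie_{K_i}g=0$ for the continued Killing fields $K_i=K_i^\mu(t,x)\pl_\mu$ block by block and reduce everything to the classification of invariant tensors on a maximally symmetric space from Section \ref{bvxeem}. Since each $K_i$ has no $\pl_0$ component, the spatial block $(\mu\nu)$ of $\Lie_{K_i}g=0$ is just $\Lie_{K_i}h=0$, which holds by hypothesis because $h_{\mu\nu}(t,\cdot)$ is a constant-curvature metric for each $t$. The mixed block $(0\nu)$ and the time block $(00)$ read
\begin{equation*}
  \big(\Lie_{K_i}g_{0\,\cdot}\big)_\nu=-h_{\mu\nu}\,\pl_0K_i^\mu,\qquad
  \Lie_{K_i}g_{00}=-2\,g_{0\mu}\,\pl_0K_i^\mu,
\end{equation*}
where on the right the only obstruction to pure invariance is the time derivative $\pl_0K_i^\mu$ of the Killing components. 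Here $g_{00}$ is a scalar, $g_{0\mu}$ a covector, and $\pl_0h_{\mu\nu}$ a symmetric $2$-tensor on the three-dimensional constant-curvature slice $\MS$.

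First I would record the consequence of the theorem's own claim: if a coordinate system can be found in which the Killing components are time independent, $\pl_0K_i^\mu=0$, then the two displayed equations collapse to $\Lie_{K_i}g_{0\,\cdot}=0$ and $\Lie_{K_i}g_{00}=0$. The covector $g_{0\mu}$ is then homogeneous and isotropic on $\MS$, so it vanishes identically by the covector computation of Section \ref{bvxeem}; and $g_{00}$ is an invariant scalar, hence a function of $t$ alone. A final $x$-independent reparametrization $t\mapsto t'=\int\!\sqrt{g_{00}}\,dt$ normalizes $g_{00}=1$ without reviving $g_{0\mu}$, producing exactly metric (\ref{qvsear}); that each slice stays a constant-curvature space follows from Theorem \ref{tkildi}, since the group still acts on it transitively and isotropically.

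Everything therefore hinges on reaching a chart in which $\pl_0K_i^\mu=0$, and this I expect to be the main obstacle, because the time dependence of the $K_i$ is genuinely coupled to the cross terms $g_{0\mu}$ through the equations above. The cleanest resolution I would pursue is a geometric one: erect the time-like geodesics normal to one fixed space-like slice and pass to the associated Gaussian normal (synchronous) coordinates. By the Gauss lemma these coordinates give $g_{00}=1$ and $g_{0\mu}=0$ outright, and the geodesic arc length becomes the new time. The point to verify carefully is compatibility with the isometries: an isometry of a slice maps unit normals to unit normals and geodesics to geodesics, so it commutes with the normal flow; hence in synchronous coordinates the group acts as $(\tau,y)\mapsto(\tau,\psi^s(y))$ with $\psi^s$ independent of $\tau$, which is precisely $\pl_0K_i^\mu=0$. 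The induced metric on each synchronous slice is then homogeneous and isotropic, so it is a constant-curvature metric by Theorem \ref{tkildi}, completing the reduction. The alternative, more computational route---trivializing the family $\lbrace K_i(t,\cdot)\rbrace$ by a $t$-dependent spatial change of coordinates that conjugates the isometry action at each time to that at a reference time---reaches the same normal form but requires controlling the smoothness of the conjugating diffeomorphisms in $t$.
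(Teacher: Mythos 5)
Your proposal is correct and follows essentially the same route as the paper: both pass to Gaussian normal (synchronous) coordinates built from the time-like geodesics orthogonal to a fixed constant-curvature slice, obtain $g_{00}=1$ and $g_{0\mu}=0$ there, and then show the Killing fields remain purely spatial and time-independent so that each slice is still a constant-curvature space. The only cosmetic difference is that the paper derives $K^0\equiv0$ and $\partial_0K^\mu=0$ by integrating the $(0,0)$- and $(0,\mu)$-components of the Killing equations with initial data on the slice, whereas you argue geometrically that the isometries commute with the normal geodesic flow; the two justifications are equivalent.
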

\begin{proof}
Let $x^\mu$ be coordinates on $\MS$. We fix one of the hypersurfaces $t=\const$.
The corresponding tangent vector has spatial components only: $X=X^\mu\pl_\mu$.
The corresponding orthogonal vector $n^\al\pl_\al$ must satisfy the relation
\begin{equation*}
  n^0X^\nu g_{0\nu}+n^\mu X^\nu g_{\mu\nu}=0.
\end{equation*}
This equality must be satisfied for all tangent vectors $X$, thereby giving rise
to spatial components of normal vectors
\begin{equation*}
  n^\mu=-n^0g_{0\nu}\hat g^{\mu\nu},
\end{equation*}
where $\hat g^{\mu\nu}$ is the inverse spatial metric,
$\hat g^{\mu\nu}g_{\nu\rho}=\dl^\mu_\rho$. It is easy to show that normal
vectors are time-like.

We now draw a geodesic line through each point of the space-like hypersurface
$x\in\MS$ along the normal direction. We choose the geodesic length $s$ as the
time coordinate. Without loss of generality, we can assume that the initial
space-like hypersurface corresponds to $s=0$. Thus, we have built a coordinate
system $\lbrace x^\al\rbrace=\lbrace x^0:=s,x^\mu\rbrace$ in some neighborhood
of the hypersurface $\MS$.

By construction, the lines $x^\al(\tau)$ of the form
$\lbrace x^0=s,x^\mu=\const\rbrace$, where $\tau:=s$ are geodesics with the
velocity vector $\dot x^\al=\dl^\al_0$. From the geodesic equation
\begin{equation*}
  \ddot x^\al=-\Gamma_{\bt\g}{}^\al\dot x^\bt\dot x^\g,
\end{equation*}
it follows that $\Gamma_{00}{}^\al=0$ in the coordinate system under
consideration. Lowering the index $\al$, we find an equation for the metric
components:
\begin{equation}                                                  \label{eqcote}
  \pl_0 g_{0\al}-\frac12\pl_\al g_{00}=0.
\end{equation}
By construction, the time-like tangent vector $\pl_0$ has a unit length. It
follows that $g_{00}=1$. Then Eqn (\ref{eqcote}) takes the form
$\pl_0g_{0\mu}=0$. This differential equation can be solved with the initial
condition $g_{0\mu}(s=0)=0$, because the vector $n$ is perpendicular to the
initial hypersurface. For differentiable functions $g_{0\mu}$, the equation has
unique solution $g_{0\mu}=0$. Thus, the metric is of block-diagonal form
(\ref{qvsear}) in the resulting coordinate system.

The hypersurfaces $t=\const$ given above are called {\em geodesically parallel}.

So far, we have ignored the properties of constant-curvature surfaces. The proof
is general and implies that locally there exists a `temporal' gauge for the
metric (or the synchronous coordinate system).

By construction, the zeroth component of the Killing vector vanishes,
$K^0(0,x)=0$, on the hypersurface $s=0$. The $(0,0)$-component of the Killing
equations, which can be more conveniently written in form (\ref{ekilco}), yields
the equation $\pl_s K^0(s,x)=0$. For sufficiently smooth functions, this
equation with initial condition $K^0(0,x)=0$ has a unique solution,
$K^0(s,x)=0$, for all $s$ admissible in the coordinate system. As a result, all
hypersurfaces $s=\const$ in some neighborhood of the initial hypersurface have
constant curvature.

If the metric (\ref{qvsear}) is block-diagonal, then the $(0,\mu)$-components
of Killing equations (\ref{ekilco}) take the form $\pl_s K^\mu=0$. It follows
that the Killing vector field is independent of time.

The spatial $(\mu,\nu)$-components of the Killing equations are satisfied
because $K$ is the Killing vector field on $\MS$.

Returning to the original notation $s\mapsto t$, we obtain metric
(\ref{qvsear}).
\end{proof}

Hilbert introduced coordinates in which the metric takes block-diagonal form
(\ref{qvsear}) (see Eqn (22) in his paper \cite{Hilber24}). The resulting
coordinate system was called Gaussian. However, the corresponding spatial
sections were not constant-curvature spaces, and Killing vector fields were not
considered.

If the metric is block-diagonal, Eqn (\ref{qvsear}), and $K=K^\mu\pl_\mu$, then
Killing equations (\ref{ekilco}) split into temporal and spatial components:
\begin{alignat}{2}                                                \label{qnjvyt}
  (\al,\bt)&=(0,0):\qquad\qquad & 0&=0,
\\                                                                \label{qnbcks}
  (\al,\bt)&=(0,\mu): & h_{\mu\nu}\pl_0 K^\nu&=0,
\\                                                                \label{qawqjh}
  (\al,\bt)&=(\mu,\nu): & h_{\mu\rho}\pl_\nu K^\rho+h_{\nu\rho}\pl_\mu K^\rho
  +K^\rho\pl_\rho h_{\mu\nu}&=0.
\end{alignat}

\begin{theorem}                                                   \label{tsrewh}
Under the assumptions of Theorem \ref{tnbdht} metric (\ref{qvsear}) has the form
\begin{equation}                                                  \label{qbvxfo}
  ds^2=dt^2+a^2\overset\circ g_{\mu\nu}dx^\mu dx^\nu,
\end{equation}
where $a(t)>0$ is an arbitrary sufficiently smooth function (the scale factor)
and $\overset\circ g_{\mu\nu}(x)$ is a constant-curvature metric depending only
on spatial coordinates $x\in\MS$.
\end{theorem}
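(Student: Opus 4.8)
The plan is to use the two facts delivered by Theorem \ref{tnbdht}: the metric is already in the block-diagonal form (\ref{qvsear}) with $h_{\mu\nu}(t,x)$ a constant-curvature metric on $\MS$ for every $t$, and the spatial Killing vectors $K^\mu$ are \emph{independent of} $t$. Consequently one and the same family of $n(n-1)/2$ vectors $K_i$ generates the full isometry group of $(\MS,h(t,\cdot))$ for all values of $t$ simultaneously. Since all the time dependence sits in $h_{\mu\nu}(t,x)$, the objective is to show that this dependence factorizes, $h_{\mu\nu}(t,x)=a^2(t)\,\overset\circ g_{\mu\nu}(x)$.

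First I would record that the spatial Killing equations (\ref{qawqjh}) are, for each fixed $t$, precisely the statement $\Lie_{K_i}h(t,\cdot)=0$. Because the components $K_i^\rho$ do not depend on $t$, I differentiate this identity with respect to $t$; the operator $\pl_0$ passes through $K^\rho$ and $\pl_\nu K^\rho$ untouched, leaving
\[
  K^\rho\pl_\rho(\pl_0 h_{\mu\nu})+\pl_\mu K^\rho(\pl_0 h_{\rho\nu})
  +\pl_\nu K^\rho(\pl_0 h_{\mu\rho})=0,
\]
which is exactly $\Lie_{K_i}(\pl_0 h)=0$. Thus $\pl_0 h_{\mu\nu}$ is a symmetric second-rank covariant tensor on the constant-curvature space $(\MS,h(t,\cdot))$ invariant under its entire isometry group, i.e. a homogeneous and isotropic tensor.

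Next I invoke the computation of Example \ref{eabhhf}: such a tensor must be proportional to the metric, formula (\ref{qmkujh}). Hence $\pl_0 h_{\mu\nu}=C\,h_{\mu\nu}$, and since the coefficient is, up to a numerical factor, the invariant scalar $h^{\mu\nu}\pl_0 h_{\mu\nu}$, it is constant over $\MS$ although it may still depend on the parameter, so $C=C(t)$. Viewing this as a linear first-order ordinary differential equation in $t$ at each fixed $x$, I integrate to get $h_{\mu\nu}(t,x)=\exp\!\big(\int_{t_0}^{t}C(t')\,dt'\big)\,h_{\mu\nu}(t_0,x)$. Setting $a^2(t):=\exp\!\big(\int_{t_0}^{t}C\,dt'\big)>0$ and $\overset\circ g_{\mu\nu}(x):=h_{\mu\nu}(t_0,x)$ produces the asserted form (\ref{qbvxfo}), with $\overset\circ g$ a constant-curvature metric depending on $x$ only and $a(t)>0$.

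The main obstacle, and the crux of the argument, is the factorization step: one must be certain that $\pl_0 h_{\mu\nu}$ is genuinely homogeneous and isotropic so that Example \ref{eabhhf} applies. This rests entirely on the time-independence of $K^\mu$ established in Theorem \ref{tnbdht}; without it the differentiated Killing equation would acquire extra terms $\pl_0 K^\rho$ and the clean invariance condition would be destroyed. Two minor checks remain: that the spatial dimension is large enough ($\ge 3$ in the cosmological case $\dim\MM=4$) for (\ref{qmkujh}) to hold with no antisymmetric ambiguity, and that $C(t)$ inherits sufficient smoothness from $h$ for the integration to yield a sufficiently smooth, strictly positive scale factor $a(t)$.
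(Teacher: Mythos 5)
Your proposal is correct and follows essentially the same route as the paper: differentiate the spatial Killing equations (\ref{qawqjh}) in time using the time-independence of $K^\mu$ from Theorem \ref{tnbdht}, conclude via Example \ref{eabhhf} that $\dot h_{\mu\nu}=f(t)h_{\mu\nu}$, and integrate. The only (cosmetic) difference is that you integrate the resulting linear ODE directly to get $h_{\mu\nu}(t,x)=\exp\bigl(\int_{t_0}^{t}f\,dt'\bigr)h_{\mu\nu}(t_0,x)$, whereas the paper reparametrizes time by $dt'=f\,dt$ before solving; your version is, if anything, slightly cleaner.
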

\begin{proof}
The Killing equations (\ref{qawqjh}) are satisfied because $h_{\mu\nu}(t,x)$ is
a constant-curvature metric on $\MS$ for all $t\in\MR$. Theorem \ref{tnbdht}
claims that Killing vector fields are independent of time. Thus, differentiating
equation (\ref{qawqjh}) in time, we find the relation
\begin{equation*}
  \dot h_{\mu\rho}\pl_\nu K^\rho+\dot h_{\nu\rho}\pl_\mu K^\rho
  +K^\rho\pl_\rho \dot h_{\mu\nu}=0.
\end{equation*}
This implies that the time derivative of the metric $\dot h_{\mu\nu}$ is a
homogeneous and isotropic second-rank tensor. Example \ref{eabhhf} says that the
time derivative must be proportional to the metric itself:
\begin{equation}                                                  \label{qbncgt}
  \dot h_{\mu\nu}=f h_{\mu\nu},
\end{equation}
where $f(t)$ is a sufficiently smooth function of time.

If $f=0$, the proof is trivial, and the metric is already of form (\ref{qbvxfo})
for $a=\const$.

Letting $f\ne0$, we introduce a new temporal coordinate $t\mapsto t'$ defined by
the differential equation
\begin{equation*}
  dt'=f(t)dt.
\end{equation*}
Equation (\ref{qbncgt}) then takes the form
\begin{equation*}
  \frac{dh_{\mu\nu}}{dt'}=h_{\mu\nu}.
\end{equation*}
The general solution is given by
\begin{equation*}
  h_{\mu\nu}(t',x)=C\exp(t')\overset\circ g_{\mu\nu}(x),\qquad C=\const\ne0,
\end{equation*}
where $\overset\circ g_{\mu\nu}(x)$ is a constant-curvature metric on $\MS$,
which is independent of time. Hence, representation (\ref{qbvxfo}) follows.
\end{proof}
Theorem \ref{tbchyt} follows from theorems \ref{tnbdht} and \ref{tsrewh}.
\section{Example                                                 \label{sjhdtr}}
The explicit form of the Friedmann metric for a homogeneous and isotropic
universe, Eqn (\ref{qbvxfo}), depends on coordinates on the constant-curvature
space. The Friedmann metric in the stereographic coordinates is diagonal:
\begin{equation}                                                  \label{emfree}
  g=\begin{pmatrix} 1 & 0 \\
  0 & \frac{\displaystyle a^2\eta_{\mu\nu}}{\displaystyle \big(1+b_0x^2\big)^2}
  \end{pmatrix},
\end{equation}
where $b_0=-1,0,1$, $\eta_{\mu\nu}:=\diag(---)$ is the negative-definite
Euclidean metric and $x^2:=\eta_{\mu\nu}x^\mu x^\nu\le0$. Because the metric on
spatial sections is negative definite, the values $b_0=-1,0,1$ correspond to the
respective spaces of negative, zero and positive curvature. In cases of positive
and zero curvature, the stereographic coordinates are defined on the whole
Euclidean space $x\in\MR^3$. In the negative-curvature case, the stereographic
coordinates are defined in the interior of the ball $|x^2|<1/b_0$.

We transform the coordinates as $x^\mu\mapsto x^\mu/a$. The resulting metric
takes the nondiagonal form, while the conformal factor disappears:
\begin{equation}                                                  \label{emfrei}
  g=\begin{pmatrix}
  1+{\displaystyle\frac{\dot b^2x^2}{4b^2\big(1+bx^2\big)^2}} &
  {\displaystyle\frac{\dot bx_\nu}{2b\big(1+bx^2\big)^2}} \\[2mm]
  {\displaystyle\frac{\dot bx_\mu}{2b\big(1+bx^2\big)^2}} &
  {\displaystyle\frac{\eta_{\mu\nu}}{\big(1+bx^2\big)^2}} \end{pmatrix},
\end{equation}
where
\begin{equation}                                                  \label{qbgsty}
  b(t):=\frac{b_0}{a^2(t)},
\end{equation}
and the dot denotes the time derivative.

We see that the metric of a homogeneous and isotropic universe can be
nondiagonal without the conformal factor. Moreover, the scalar curvature of
spatial sections, which is proportional to $b(t)$, explicitly depends on time.

Now we simply disregard off-diagonal elements, choose $g_{00}=1$, and add the
scale factor. Then the metric takes the form
\begin{equation}                                                  \label{emfrep}
  g=\begin{pmatrix} 1 & 0 \\
  0 & {\displaystyle\frac{a^2\eta_{\mu\nu}}{\big(1+bx^2\big)^2}} \end{pmatrix}.
\end{equation}
This metric contains two arbitrary independent functions of time: $a(t)>0$ and
$b(t)$. It is nondegenerate for any $b$, including zero. All $t=\const$ sections
of the corresponding space-time are obviously spaces of constant curvature and
are therefore homogeneous and isotropic. The metric is interesting because, in
general, it can be used to analyze solutions passing through the zero $b=0$. If
such solutions exist, the spatial sections change the curvature from positive to
negative values during the time evolution and vice versa.

We cannot eliminate an arbitrary function $b(t)$ by means of a coordinate
transformation without producing off-diagonal terms.

There is a curious situation. On one hand, all spatial sections of metric
(\ref{emfrep}) are homogeneous and isotropic. On the other hand, any
homogeneous and isotropic metric must have form (\ref{qkdtwu}). The key is that
(\ref{emfrep}) is in general not homogeneous and isotropic. Indeed, each
$t=\const$ section of the space-time $\MM$ is a constant curvature space, and
spatial $(\mu,\nu)$-components of Killing equations (\ref{qawqjh}) are
satisfied, while mixed $(0,\mu)$-components are not. In the stereographic
coordinates, six independent Killing vectors of the spatial sections are
expressed as
\begin{equation}                                                  \label{qbsghu}
\begin{split}
  \hat K_{0\mu}&=(1+bx^2)\pl_\mu-\frac2bx_\mu x^\nu\pl_\nu,
\\
  \hat K_{\mu\nu}&=x_\mu\pl_\nu-x_\nu\pl_\mu,
\end{split}
\end{equation}
where the indices $\mu,\nu=1,2,3$ label Killing vector fields. The first three
Killing vectors generate translations at the coordinate origin $x^2=0$, while
the last three generate rotations. We see that the first three Killing vector
fields explicitly depend on time through function $b(t)$, while equations
(\ref{qnbcks}) are not satisfied.

There is another method to see that metric (\ref{emfrep}) is not homogeneous
and isotropic. Direct calculation yields the scalar curvature:
\begin{equation*}
  R=-\frac{24b}{a^2}+6\left[\frac{\ddot a}a+\frac{\dot a^2}{a^2}
  -\frac1{1+bx^2}\left(4\frac{\dot a\dot bx^2}a+\ddot bx^2\right)
  +3\frac{\dot b^2x^4}{(1+bx^2)^2}\right],
\end{equation*}
which explicitly depends on $x$ and is therefore not homogeneous and isotropic.

This example shows that the homogeneity and isotropy of spatial sections are not
sufficient for the complete four dimensional metric to be homogeneous and
isotropic. The equivalent definition is as follows.
\begin{defn}
A space-time is called {\em homogeneous and isotropic} if: \newline
(1) \parbox[t]{.95\linewidth}{all constant-time $t=\const$ sections are
constant-curvature spaces $\MS$;}\newline
(2) \parbox[t]{.95\linewidth}{the extrinsic curvature of hypersurfaces
$\MS\hookrightarrow\MM$ is homogeneous and isotropic.\qed}
\end{defn}

The definition of the extrinsic curvature of embedded surfaces can be found,
i.e., in \cite{Wald84,Katana13B}. In our notation, the extrinsic curvature
$K_{\mu\nu}$ for block-diagonal metric (\ref{qvsear}) is proportional to the
time derivative of the spatial part of the metric
\begin{equation*}
  K_{\mu\nu}=-\frac12\dot h_{\mu\nu}.
\end{equation*}

The last definition of a homogeneous and isotropic space-time is equivalent to
the definition given in Section \ref{seqway}. Indeed, the first condition
implies that the space-time is a topological product $\MM=\MR\times\MS$. It
follows that the metric can be mapped into block-diagonal form (\ref{qvsear}).
Then the second condition in the definition yields Eqn (\ref{qbncgt}), and we
can follow the proof of Theorem \ref{tsrewh}.

We note that the second condition in the definition of a homogeneous and
isotropic universe is necessary because metric (\ref{emfrep}) provides a
counterexample.
\section{Conclusion}
In this paper, we have given two equivalent definitions of a homogeneous and
isotropic space-time. We also explicitly proved Theorem \ref{tbchyt}, which
describes the most general form of a homogeneous and isotropic metric up to a
coordinate transformation. This is the Friedmann metric. Although the theorem is
known, its proof and corresponding definitions are difficult to find in the
literature. It seems that the proof of Theorem \ref{tsrewh} and the second
definition of a homogeneous and isotropic space-time were not known before. The
proof of Theorem \ref{tsrewh} is simple, but not simpler than the one in book
\cite{Weinbe72}. However, it is well adapted to proving the equivalence of the
definitions.

The research was supported by the Russian Science Foundation
(Project ¹14-50-00005).

\end{document}